\documentclass[11pt]{article}
\usepackage{setspace}
\usepackage{amsmath}
\usepackage{amsfonts}
\usepackage{geometry}
\usepackage{hyperref}
\usepackage{natbib}
\usepackage{fancyhdr}
\usepackage{titlesec}
\usepackage{lipsum} 

\usepackage{amssymb}
\usepackage{amsfonts}
\usepackage[overload]{textcase}
\usepackage[T1]{fontenc}
\usepackage[utf8]{inputenc}
\usepackage{longtable}
\usepackage[pdftex]{graphicx}

\usepackage{caption}
\usepackage{subcaption}

\usepackage{color}
\definecolor{mygreen}{rgb}{0,0.6,0}
\definecolor{myblue}{rgb}{0.3,0.2,0.8}
\definecolor{myred}{rgb}{0.8,0.1,0.1}

\usepackage[american]{babel}
\usepackage{csquotes}
\usepackage{mathrsfs}
\usepackage{float}
\usepackage{amsthm}
\newtheorem{theorem}{Theorem}
\usepackage{enumerate}
\usepackage{enumitem}

\newtheorem{corollary}{Corollary}
\newtheorem{lemma}{Lemma}

\def\hjse{h^{\rm JSE}}
\def\sjse{\Sigma^{\rm JSE}}
\def\smjse{\Sigma^{\rm MJSE}}
\def\sms{\Sigma^{\rm MS}}

\def\djse{\Delta^{\rm JSE}}
\def\bfo{\mathbf{1}}
\def\bfz{\mathbf{0}}
\def\sjsm{\Sigma^{\rm JSM}}
\def\hjsm{H_{\rm JSM}}

\titleformat{\section}[block]{\normalfont\Large\bfseries}{\thesection}{1em}{}
\titleformat{\subsection}[runin]{\normalfont\bfseries}{\thesubsection}{1em}{}
\titleformat{\subsubsection}[runin]{\normalfont}{\thesubsubsection}{1em}{}

\title{\textbf{Understanding the Long-Only Minimum Variance Portfolio} }
\author{Nicholas Gunther \\
Consortium for Data Analytics in Risk (CDAR), UC Berkeley \\
{nlgunther@gmail.com}\\
Alec Kercheval \\ Dept. of Mathematics, Florida State University and CDAR, UC Berkeley \\ 
{akercheval@fsu.edu} \\
Ololade Sowunmi \\
Dept. of Mathematics, Florida State University \\
{osowunmi@fsu.edu} 
}

\date{First version: 10 Dec 2024; This version: 9 July 2025}

\begin{document}

\maketitle
\thispagestyle{empty} 

\begin{abstract}
For a covariance matrix coming from a factor model of returns, we investigate the relationship between the long-only global minimum variance portfolio and the asset exposures to the factors.  In the case of a 1-factor model, we provide a rigorous and explicit description of the long-only solution in terms of the parameters of the covariance matrix. For $q>1$ factors, we provide a description of the long-only portfolio in geometric terms.  The results are illustrated with empirical daily returns of US stocks.


\end{abstract}


\section{Long-only minimum variance}\label{sec:1}

\subsection{Introduction}

There is a long-standing interest in equity portfolios optimized to have the lowest possible variance. The optimal such portfolio depends on the covariances between pairs of assets, and on the particular constraints of interest.

If there are $p$ assets available for investment, we denote by $w= (w_1,\dots,w_p)^\top \in \mathbf{R}^p$ the $p$-dimensional vector of asset weights defining the portfolio.  The global minimum variance portfolio $w^{LS}$ denotes the long-short portfolio solving the simplest problem
 \begin{equation}\label{eq:LSprob}
 	      \begin{split}		
 		 &\min_{w \in \mathbf{R}^p}  w^\top \Sigma w \\
                &w^\top\bfo_p= 1,
 	      \end{split}
\end{equation}
where $\Sigma$ denotes the positive definite covariance matrix of asset returns; $\bfo_p$ denotes the vector of dimension $p$ whose every entry is 1; 
and $w^\top \bfo_p =1$ is the full investment condition setting the sum of the weights equal to 1. 
For the long-short portfolio, some of the weights may be negative.

 Our focus is the long-only minimum variance (LOMV) problem 
 \begin{equation}\label{eq: prob 1}
 	      \begin{split}		
 		 &\min_{w \in \mathbf{R}^p}  w^\top \Sigma w \\
                &w^\top\bfo_p= 1\\
             &w_i \geq 0 \text{ } \text{for all  $i=1,2,...,p$.}
 	      \end{split}
\end{equation}
The long-only constraints $w_i \geq 0$ are often required for real investment portfolios due to the complications and costs of short positions.

The solution ${w =  w^L}$ of \eqref{eq: prob 1} represents the long-only fully invested global minimum risk portfolio, and can be contrasted with the solution $w^{LS}$ of the long-short problem. 
The portfolio $w^{LS}$ solving problem \eqref{eq:LSprob}  is given by the
simple formula
\begin{equation}
    w^{LS} = \frac{\Sigma^{-1}\bfo_p}{\bfo_p^\top\Sigma^{-1}\bfo_p}.
\end{equation}
The long-only problem \eqref{eq: prob 1} is less straightforward.

As we show in this article, the main difficulty in problem \eqref{eq: prob 1} is determining which are the active (positive weight) assets in the optimal portfolio, or, equivalently, which are the assets for which the long-only constraints are binding. 
Denote by $K = \{i \leq p: w^L_i >0\}$ the set of active assets in the long-only optimal  portfolio, and let $k \leq p$ denote the number of elements of $K$ . Once we have determined $K$, Theorem \ref{thm:wL} solves the problem: if we denote by $\Sigma^K$ the $k \times k$ matrix obtained from $\Sigma$ by deleting all the rows and columns not in $K$, then the $k$ positive weights of $w^L$ are given by the corresponding entries of the $k$-dimensional vector
\begin{equation}
    w^{K} = \frac{(\Sigma^K)^{-1}\bfo_k}{\bfo_k^\top(\Sigma^K)^{-1}\bfo_k}.
\end{equation}

In short, {\em the active long-only minimum risk portfolio holdings are those of the long-short minimum risk portfolio corresponding to a reduced set of available assets.}  This is intuitively reasonable\footnote{Geometrically, if, for example, the active LOMV assets are the first $k$ in the list, then the corresponding $k$-vector minimizes $k$-dimensional variance in the interior of the positive $k$-orthant.} because the long-only constraints are not binding on the positive holdings of $w^L$.

This leaves the problem of determining $K$, the set of active assets of $w^L$, which is normally much smaller than the set of positive-weight assets in the long-short portfolio $w^{LS}$ (e.g. Figure \eqref{fig:w-v-beta}). In this article we analyze that problem when the covariance matrix comes from a factor model.
In the case of a single-factor model, we provide an essentially explicit description of $K$ below in Theorem \ref{thm:1}. There, one factor explains covariance between assets, and the covariance matrix takes the form 
\begin{equation} 
\Sigma = \sigma^2 \beta \beta^\top + \Delta,
\end{equation}
where $\sigma^2 >0$ is the factor return variance,  $\beta$ is a $p$-vector of exposures to the  factor, and $\Delta$ is a diagonal matrix of specific variances.

In the case of a multiple factor model with $q >1$ factors, the $p \times p$ covariance matrix takes the form
\begin{equation}
    \Sigma = B \Omega B^\top + \Delta,
\end{equation}
where $B$ is a $p \times q$ matrix whose columns are the asset exposures to each of $q$ factors, and $\Omega$ is a $q \times q$ invertible matrix of factor variances, diagonal in the case when the columns of $B$ are principal components. When $q >1$, we know of no direct way to compute the set $K$ of active assets that is similar to Theorem \ref{thm:1}. However, Theorem \ref{thm:multi} gives us a necessary condition satisfied by $K$, as follows. Let $B_i \in \mathbf{R}^q$ denote the $i$th row of $B$. Then there is a $(q-1)$-dimensional hyperplane $H$ in $\mathbf{R}^q$ such that  the elements $i$ of $K$ are all those such that $B_i$ lies on the same side of $H$ as the origin.

The single factor case was studied in \cite{cst2011}, where they assume the single factor is the market return, and provide an implicit solution to the long-only problem that is equivalent to the one here when our vector beta of exposures to the single factor is taken to be the market beta. Their article inspired our work, and we discuss the relationship between their results and ours further below.

\subsection{Main Results}

We state the results outlined above in more detail in this section.  The proofs of the following theorems appear in Section \ref{sec:proof}.
\medskip

{\bf Assumption for Theorem \ref{thm:wL}.} Suppose that the covariance matrix $\Sigma$ of returns for a universe of $p$ assets is an arbitrary $p \times p$ symmetric positive definite matrix. 

{
\begin{theorem} \label{thm:wL}
 Denote by $w^L$ the solution of problem \eqref{eq: prob 1} and let $K$ denote the set of active assets in $w^L$:
    \begin{equation}
        K = \{i \leq p : w^L_i > 0\},
    \end{equation}
and $k = |K| \leq p$, the number of active assets.
 Let $\Sigma^{K,0}$ be the modified matrix obtained from $\Sigma$ by setting to zero the rows and columns not belonging to $K$.

Then  
\begin{equation}
    w^L  = \frac{(\Sigma^{K,0})^+\bfo_p}{\bfo_p^\top(\Sigma^{K,0})^+\bfo_p}
\end{equation}
where $^+$ denotes the Moore-Penrose inverse.\footnote{For a symmetric matrix $S$ with singular value decomposition $S = UDU^\top$ for orthogonal $U$ and diagonal $D$, the Moore-Penrose inverse is defined by $S^+ = UD^+U^\top$, where the diagonal matrix $D^+$ is obtained by replacing each nonzero element by its inverse.}
\end{theorem}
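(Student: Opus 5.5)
We would prove Theorem \ref{thm:wL} in two steps: show that the active block of $w^L$ solves a reduced long-short problem on the assets in $K$, then check that the Moore--Penrose formula written in $p$ coordinates reproduces exactly that reduced solution, padded with zeros.

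For the first step, fix $K$ and, after relabeling, take $K=\{1,\dots,k\}$. Write $w^L=(w^K_*,0)$ with $w^K_*\in\mathbf{R}^k$ strictly positive. Consider the reduced problem: minimize $v^\top\Sigma^K v$ over $v\in\mathbf{R}^k$ subject to $v^\top\bfo_k=1$. Because $\Sigma$ is positive definite, its principal submatrix $\Sigma^K$ is also positive definite. So the reduced objective is strictly convex, and its unique minimizer is $v^*=(\Sigma^K)^{-1}\bfo_k/\bfo_k^\top(\Sigma^K)^{-1}\bfo_k$.

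We claim $w^K_*=v^*$, and argue by perturbation. Suppose $v^*\neq w^K_*$. For small $t>0$ the vector $w^K_*+t(v^*-w^K_*)$ is still strictly positive, since $w^K_*$ lies in the open positive orthant. It also sums to $1$. Padded with zeros it is therefore feasible for \eqref{eq: prob 1}. By strict convexity its variance is strictly smaller than that of $w^L$, which contradicts the optimality of $w^L$. Equivalently, the KKT conditions give $\Sigma^K w^K_* = \lambda\bfo_k$ on $K$, because the multipliers for the constraints $w_i\ge 0$ vanish where $w_i>0$.

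For the second step we compute the Moore--Penrose inverse of $\Sigma^{K,0}$. After the relabeling, $\Sigma^{K,0}=\mathrm{diag}(\Sigma^K,0_{p-k})$. We take the candidate $S^+=\mathrm{diag}((\Sigma^K)^{-1},0)$ and confirm it in one of two ways. Either verify the four Penrose identities directly, or use the footnote's definition: diagonalize $\Sigma^K=U_1D_1U_1^\top$ with $D_1$ having positive diagonal, and set $U=\mathrm{diag}(U_1,I)$.

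The relabeling must not change the answer. It is a conjugation by a permutation matrix $P$, and $(P^\top S P)^+=P^\top S^+P$ because $P$ is orthogonal and the Moore--Penrose inverse is unique. Hence
\[
(\Sigma^{K,0})^+\bfo_p=\bigl((\Sigma^K)^{-1}\bfo_k,\,0\bigr),
\qquad
\bfo_p^\top(\Sigma^{K,0})^+\bfo_p=\bfo_k^\top(\Sigma^K)^{-1}\bfo_k>0 .
\]
The denominator is positive because $(\Sigma^K)^{-1}$ is positive definite. The ratio is therefore $(v^*,0)=w^L$, which is the claim.

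The first step is the only substantive one: it uses the fact that the inactive coordinates are fixed at zero and the active ones have room to move. The Moore--Penrose computation is routine; its only points needing care are that the footnote's definition is basis-independent (uniqueness of $S^+$) and that the denominator is nonzero.
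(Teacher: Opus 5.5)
Your proof is correct, and it reaches the key fact by a different route than the paper. The paper writes down the KKT system for \eqref{eq: prob 1} and uses complementary slackness to get $\lambda_i=0$ for $i\in K$. It then reads off $2\Sigma^K w^K+\nu\bfo_k=\bfz_k$ from the $K$-rows of the stationarity equation and solves this linear system, using $\bfo_k^\top w^K=1$, first for $\nu$ and then for $w^K$. The Moore--Penrose form is simply asserted as an ``equivalent'' restatement.

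Your main argument is instead a direct perturbation. Because $w^K_*$ lies in the open positive orthant, moving toward the reduced long-short minimizer $v^*$ stays feasible. By convexity and uniqueness of $v^*$, that move strictly lowers variance, which forces $w^K_*=v^*$.

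What your route buys:
\begin{itemize}
\item It is more elementary and self-contained: it needs neither the KKT theorem nor a constraint qualification, which the paper uses tacitly.
\item It directly proves the paper's follow-up remark that $w^K$ is the unique solution of the $k$-dimensional long-short problem.
\item It fills in the Moore--Penrose step the paper skips: the block-diagonal pseudo-inverse, invariance under permutation via uniqueness of $S^+$, and positivity of the denominator.
\end{itemize}

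What the paper's route buys is the explicit multiplier $\nu=-2/(\bfo_k^\top(\Sigma^K)^{-1}\bfo_k)$, together with the KKT framework giving $\lambda_i\ge 0$ on the inactive set. The proofs of Theorems \ref{thm:1} and \ref{thm:multi} reuse exactly these two ingredients, so your perturbation argument, while fully sufficient here, would not by itself set up the later results.
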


Equivalently, if we denote by  $w^K$ the $k$-vector obtained by deleting all the zero entries of the $p$-vector $w^L$, and $\Sigma^K$ denotes the $k \times k$ matrix obtained from $\Sigma$ by deleting the rows and columns not belonging to $K$, then
\begin{equation}
    w^K  = \frac{(\Sigma^K)^{-1}\bfo_k}{\bfo_k^\top(\Sigma^K)^{-1}\bfo_k}.
\end{equation}
This is the unique solution of the $k$-dimensional long-short problem \eqref{eq:LSprob} with
$\Sigma$ replaced by $\Sigma^K$, and $w^L$ can be recovered from $w^K$ by adding back zero entries for the deleted assets.

}


    

Theorem \ref{thm:wL} allows us to determine the long-only minimum risk portfolio as soon as we know the set $K$ of active assets. It remains only to determine $K$, and for the single-index case $q=1$ this can be accomplished by an explicit method described next in Theorem \ref{thm:1}.
\medskip

{\bf Assumptions for Theorem \ref{thm:1}.} Assume now a single-factor returns model which gives the covariance matrix $\Sigma$ the form
\begin{equation} \label{eq:singlefactor}
\Sigma = \sigma^2 \beta \beta^\top + \Delta,
\end{equation}
where $\sigma^2 >0$ is the factor return variance,  $\beta$ is a $p$-vector of factor exposures, and $\Delta= diag(\delta_1^2,\delta_2^2,...,\delta_p^2)$ is a diagonal matrix of non-zero idiosyncratic asset return variances $\delta_i^2 > 0$. 

We permit the entries $\beta_i$ of $\beta$ to be positive, negative, or zero, but assume the generic condition
\begin{equation} 
\sum_{i=1}^p \frac{\beta_i}{\delta_i^2} \neq 0.
\end{equation}

Notice that the vector $\beta$ may be replaced by $-\beta$ without changing $\Sigma$, so without loss of generality we choose the sign so that 
\begin{equation} \label{eq:positive}
 \sum_{i=1}^p \frac{\beta_i}{\delta_i^2} > 0, 
\end{equation}
as would be the case if all the betas were positive.

In addition, by re-ordering the assets if necessary, for convenience we further assume without loss of generality that the betas are arranged in increasing order:
\begin{equation}
    \beta_1 \leq \beta_2 \leq \cdots \leq \beta_p.
\end{equation}

With these assumptions, our main result is
\begin{theorem} \label{thm:1}
     (\textbf{Explicit Solution to the long-only constrained problem under a 1-factor model})
Assume the betas $\beta_i$ are arranged in increasing order.     
\begin{enumerate}
    \item Let $R_1 = \frac{1}{\sigma^2}$, and for  $2 \leq i \leq p$, let 
\[
R_i = {\frac{1}{\sigma^2} + \sum_{j=1}^{i-1}\frac{\beta_j}{\delta_j^2}(\beta_j - \beta_i)}.
\]
Then there exists $s \leq p$ such that the initially positive sequence $\{R_i\}$ is monotonically increasing until $i = s$, then monotonically  decreasing for $i > s$. That is,
\begin{equation}
    0 < R_1 \leq R_2 \leq \cdots \leq R_s \geq R_{s+1} \geq \cdots \geq R_p.
\end{equation}
In particular, the  sequence $\{R_i\}$ crosses zero at most once.    

\item Let $w^L$ denote the solution of problem \eqref{eq: prob 1}, $K = \{i \leq p : w^L_i >0\}$, and $k = |K|$.  

Then
\begin{equation} \label{eq:kdef}
   k = \max\big\{i \le p:R_i >0\big\} \text{ and } K = \{1,2,\dots, k\}.
\end{equation}

 Applying Theorem \ref{thm:wL}, we may conclude the following. Let $\Sigma^{K,0}$ be the block-diagonal matrix obtained from $\Sigma$ by replacing all but the first $k$ rows and columns with zeros.  Then the solution $w^L$ is given by
\begin{equation}
    w^L  = \frac{(\Sigma^{K,0})^+\bfo_p}{\bfo_p^\top(\Sigma^{K,0})^+\bfo_p}
\end{equation}
where $+$ denotes the Moore-Penrose inverse.

\end{enumerate}
\end{theorem}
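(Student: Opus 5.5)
The plan is to handle part 1 by a direct difference computation, and part 2 through the KKT conditions for problem \eqref{eq: prob 1}, combined with the Sherman--Morrison formula for the rank-one-plus-diagonal matrix $\Sigma^K$ and Theorem \ref{thm:wL}.

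\emph{Part 1.} For $2\le i\le p-1$, a direct computation gives
\[
R_{i+1}-R_i=\sum_{j=1}^{i}\frac{\beta_j}{\delta_j^2}(\beta_j-\beta_{i+1})-\sum_{j=1}^{i-1}\frac{\beta_j}{\delta_j^2}(\beta_j-\beta_i)=-(\beta_{i+1}-\beta_i)\,S_i,
\qquad S_i=\sum_{j=1}^{i}\frac{\beta_j}{\delta_j^2}.
\]
The same formula holds for $i=1$, since $R_2-R_1=\frac{\beta_1}{\delta_1^2}(\beta_1-\beta_2)$. Because $\beta_{i+1}-\beta_i\ge 0$, the sign of $R_{i+1}-R_i$ is the sign of $-S_i$.

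Since the $\beta_j$ are increasing, the partial sums $S_i$ first decrease while $\beta_j<0$ and then increase, and by \eqref{eq:positive} they end at $S_p>0$. Hence $S_i$ is negative on an initial segment $i<s$ and nonnegative from $s$ on. This gives the claimed unimodality, with $R_1=1/\sigma^2>0$, and therefore at most one sign change.

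\emph{Part 2.} The problem is convex with linear constraints, so $w^L$ is characterized by the KKT conditions:
\[
(\Sigma w)_i=\lambda \text{ for } i\in K, \qquad (\Sigma w)_i\ge\lambda \text{ for } i\notin K, \qquad \lambda>0.
\]
For a candidate set $K=\{1,\dots,k\}$, I write $w_i=(\lambda-\sigma^2\beta_i m)/\delta_i^2$ for $i\le k$, where $m=\beta^\top w$. Substituting into $m=\sum_{i\le k}\beta_i w_i$ gives
\[
m=\frac{\lambda\,S_k}{1+\sigma^2\sum_{j\le k}\beta_j^2/\delta_j^2}.
\]
It follows that, up to the positive factor $\lambda\sigma^2/\big(\delta_i^2(1+\sigma^2\sum_{j\le k}\beta_j^2/\delta_j^2)\big)$,
\[
w_i\ \propto\ \frac{1}{\sigma^2}+\sum_{j\le k}\frac{\beta_j}{\delta_j^2}(\beta_j-\beta_i)=:R_i^{(k)} .
\]
Here $R_i^{(i-1)}=R_i$ and $R_i^{(k)}=R_i^{(i-1)}$ whenever the extra terms vanish. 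In particular, $R_k^{(k)}=R_k$, because the $j=k$ term is zero.

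The active weights are decreasing in $\beta_i$ because $S_k>0$ on the relevant range, which still needs to be checked. So positivity of all weights reduces to positivity of the last one, i.e. $R_k>0$.

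For an inactive $i>k$, the condition $(\Sigma w)_i\ge\lambda$ reads $\sigma^2\beta_i m\ge\lambda$. This is equivalent to $\beta_i S_k\sigma^2\ge 1+\sigma^2\sum_{j\le k}\beta_j^2/\delta_j^2$, i.e. $R_{k+1}^{(k)}\le 0$ at $i=k+1$, and hence (by monotonicity in $\beta_i$) also for all larger $i$.

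\emph{Matching with Part 1.} With $k=\max\{i:R_i>0\}$, the unimodal shape from Part 1 gives $R_1,\dots,R_k>0$ and $R_{k+1}\le 0$. So this $K$ satisfies the KKT conditions, and by strict convexity ($\Sigma$ positive definite) $w^L$ is unique, so the KKT point is the solution. Theorem \ref{thm:wL} then yields the formula for $w^L$.

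\emph{Main obstacle.} I expect the delicate step to be the sign bookkeeping, in two places. First, I must show that $S_k>0$ and $\lambda>0$, so that the proportionality factors are positive and the KKT direction is correct. I will try to get this from the fact that $R_{k+1}\le 0$ forces $\beta_{k+1}S_k>0$. Second, I must handle the boundary case $k=p$, where no inactive condition needs checking. If $S_k\le 0$, the active weights increase in $\beta_i$; I would then need to check directly that $R_k>0$ still controls positivity, using that the smallest relevant $R^{(k)}_i$ is either at $i=1$ (equal to $1/\sigma^2$ plus nonnegative terms) or at $i=k$.
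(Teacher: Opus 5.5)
Your Part 1 is the paper's argument. For Part 2 you take a genuinely different route. The paper starts from the optimizer $w^L$ and uses the KKT conditions as \emph{necessary} conditions:
\begin{itemize}
\item it derives the dichotomy $i\in K \iff B_K>\beta_i C_K$ (Lemma~\ref{lem:dichotomy});
\item it deduces that $K$ is an initial or terminal segment (Lemma~\ref{lem:segment});
\item it uses \eqref{eq:positive} to exclude the terminal case (Lemma~\ref{lem:CKpos});
\item only then does it read off $R_k=B_K-\beta_kC_K>0\ge R_{k+1}$ and invoke single crossing.
\end{itemize}
You instead build the candidate from the explicit $k=\max\{i:R_i>0\}$ and verify the KKT conditions. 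You then use that they are \emph{sufficient} for this convex problem, together with uniqueness. Your version is constructive and avoids reasoning about an unknown $K$. It also gives the weights $w_i\propto R_i^{(k)}/\delta_i^2$ directly, and hence Corollary~\ref{cor:threshold}. The paper's version yields the dichotomy without any ordering or sign assumption. It also makes clear why $K$ is a segment and what goes wrong without \eqref{eq:positive}.

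The step you leave open, $S_k>0$, is where the proof actually lives, and neither of your proposed fixes works as stated.

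For $k<p$, the inequality $\beta_{k+1}S_k>0$ does not by itself fix the sign of $S_k$, since both factors could be negative. Use instead your Part 1 identity $R_{k+1}-R_k=-(\beta_{k+1}-\beta_k)S_k$ together with $R_{k+1}\le 0<R_k$. This forces $\beta_{k+1}>\beta_k$ and $S_k>0$.

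For $k=p$, the case ``$S_k\le 0$'' should not be handled; it should be excluded. Here $S_k=S_p>0$ is exactly hypothesis \eqref{eq:positive}. This is the only place Part 2 needs that hypothesis, and it plays the role of the paper's Lemma~\ref{lem:CKpos}.

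Your fallback could not succeed anyway, for two reasons:
\begin{itemize}
\item $R_1^{(k)}$ is not $1/\sigma^2$ plus nonnegative terms when some $\beta_j<0$, because then $\frac{\beta_j}{\delta_j^2}(\beta_j-\beta_1)\le 0$.
\item The theorem is false without \eqref{eq:positive}. Take $p=2$, $\sigma^2=\delta_1^2=\delta_2^2=1$, $\beta=(-10,-1)$. Then $R_1=1$ and $R_2=91$, yet the variance along the simplex is $83t^2+16t+2$ with $t=w_1$, so $w^L=(0,1)$ and $k=1\neq 2$.
\end{itemize}

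Once $S_k>0$ is in hand, $\lambda>0$ is automatic, since $\lambda=1/\bfo_k^\top(\Sigma^K)^{-1}\bfo_k$ (equivalently $\lambda=w^\top\Sigma w$ for the candidate). The rest of your verification then goes through.
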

 
For an equivalent formulation in terms of ordinary matrix inverse, let $\Sigma^K$ be the $k \times k$ submatrix of $\Sigma$ consisting of the first $k$ rows and columns. Denote by $w^K$,
\begin{equation} \label{eq:closed_solution}
    w^K = \frac{ (\Sigma^{K})^{-1}\bfo_k}{\bfo_k^\top(\Sigma^{K})^{-1}\bfo_k},
\end{equation}
the long-short fully invested minimum variance solution for the first $k$ assets.

Then the solution $w^L = (w_1^L,\dots,w_p^L)^\top$ of  \eqref{eq: prob 1}   is given by
 \begin{equation}
            \begin{split}
                &w^L_i = w^K_i \text{  for $i=1,...,k$}\\
                &w^L_i=0 \text{ for  } i=k+1,\dots,p.\\
            \end{split}
\end{equation}

We note that $k = p$ if the long-short fully invested minimum variance portfolio happens to be long-only already. Otherwise, $k < p$, $R_p < 0$, and the sequence $\{R_i\}$ crosses zero exactly once.  In this situation, $k$ has the property 
\begin{equation}
 R_k > 0 \mbox{ and }    R_{k+1} = R_k + (\beta_k-\beta_{k+1})\sum_{j=1}^k \frac{\beta_j}{\delta_j^2} \leq 0.
\end{equation}
This means that the threshold index $k$ and the solution $w$ are not influenced by the values of $\delta_j$ for $j>k$, nor by the values of $\beta_j$ for $\beta_{j} > \beta_{k+1}$. 

The proof of Theorem \ref{thm:1}  also establishes, via Lemma \ref{lem:threshold} below, the following 
\begin{corollary} \label{cor:threshold}
    Under the assumptions above, if $w$ is the solution of problem \eqref{eq: prob 1}, then
\begin{equation} \label{eq:thresholdL}
   w_i > 0 \mbox{ if and only if } \beta_i < \frac{\frac{1}{\sigma^2}+\sum_{j=1}^k \frac{\beta^2_j}{\delta_j^2}}{\sum_{j=1}^k\frac{\beta_j}{\delta_j^2}}.
\end{equation}
\end{corollary}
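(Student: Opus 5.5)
The corollary follows from the KKT conditions for problem \eqref{eq: prob 1} specialized to $\Sigma = \sigma^2\beta\beta^\top + \Delta$. I will write the first-order conditions, compute the common value of the marginal variance on the active set, and check it against the inactive assets.

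\textbf{KKT conditions.} Since $\Sigma$ is positive definite, the problem is strictly convex and the KKT conditions characterize $w=w^L$ uniquely. There is a multiplier $\lambda$ such that $(\Sigma w)_i = \lambda$ for $i\in K$ and $(\Sigma w)_i \ge \lambda$ for $i\notin K$. Write $m = \beta^\top w$ for the portfolio beta. Then $(\Sigma w)_i = \sigma^2\beta_i m + \delta_i^2 w_i$. On $K$ this gives
\[
w_i = \frac{\lambda - \sigma^2 m\beta_i}{\delta_i^2}, \qquad i\in K,
\]
and off $K$ it gives $\sigma^2 m \beta_i \ge \lambda$.

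\textbf{Sign of $m$ and the threshold.} I first show $m>0$. Since $w_i>0$ on $K$, we have $\lambda > \sigma^2 m \beta_i$ for every $i \in K$. If $m\le 0$, then for $i\notin K$ we would get $\lambda \le \sigma^2 m\beta_i$, and in fact every asset with $\lambda > \sigma^2 m\beta_i$ must be active. Together with \eqref{eq:positive}, this case should lead to a contradiction, possibly via the variance computation $\lambda = w^\top\Sigma w>0$. I expect the sign argument to be the main obstacle, and would settle it by relying on Theorem \ref{thm:1}. Granting $m>0$, the condition becomes
\[
w_i>0 \iff \beta_i < \tau := \frac{\lambda}{\sigma^2 m},
\]
with $\beta_i \ge \tau$ for $i\notin K$. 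This is consistent with $K=\{1,\dots,k\}$ from Theorem \ref{thm:1}.

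\textbf{Computing $\tau$.} I close the system using the active-set formula. Multiply the expression for $w_i$ by $\beta_i$ and sum over $K$, and also sum it directly over $K$ using $\sum_K w_i = 1$. With $S_1=\sum_{j\le k}\beta_j/\delta_j^2$, $S_2=\sum_{j\le k}\beta_j^2/\delta_j^2$ and $S_0=\sum_{j\le k}1/\delta_j^2$, this gives
\[
m = \lambda S_1 - \sigma^2 m S_2 .
\]
Hence $\lambda S_1 = m\left(1+\sigma^2 S_2\right)$, and therefore
\[
\tau = \frac{\lambda}{\sigma^2 m} = \frac{\frac{1}{\sigma^2}+S_2}{S_1}.
\]
This is exactly the threshold in \eqref{eq:thresholdL}.

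\textbf{Sign of $S_1$.} The relation $\lambda S_1 = m(1+\sigma^2 S_2)$ with $m>0$ forces $\lambda$ and $S_1$ to have the same sign. Since $\lambda = w^\top \Sigma w >0$ (multiply the KKT equation by $w$), we get $S_1>0$. So the division is legitimate, and the inequality direction is preserved.

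\textbf{Link to Theorem \ref{thm:1}.} The condition $\beta_i<\tau$ is equivalent to
\[
\frac{1}{\sigma^2} + \sum_{j\le k}\frac{\beta_j}{\delta_j^2}(\beta_j-\beta_i) > 0,
\]
which relates it to the $R_i$ sequence. This is the content of Lemma \ref{lem:threshold}.
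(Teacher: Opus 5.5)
You take a different and shorter route than the paper, and your algebra is correct. Instead of inverting $\Sigma^K$ by Woodbury to get the dichotomy of Lemma \ref{lem:dichotomy}, you work with the common marginal variance $\lambda$ and the portfolio beta $m=\beta^\top w$. The identity $m=\lambda S_1-\sigma^2 m S_2$, i.e.\ $\lambda C_K=\sigma^2 m B_K$ in the paper's notation (sums over $K$), then gives the threshold at once.

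However, there is a genuine gap at the step you flagged yourself: $m>0$ is never proved. Your paragraph on the sign of $S_1$ is circular. The identity together with $\lambda=w^\top\Sigma w>0$ and $B_K>0$ only says that $m$ and $C_K$ have the same sign, so you obtain $S_1>0$ only by assuming $m>0$. Without the sign you cannot orient the inequality $\lambda>\sigma^2 m\beta_i$. You also do not know whether $K$ is an initial or a terminal segment, so writing $S_1,S_2$ as sums over $j\le k$ is not yet justified. This sign is the real content of the corollary beyond routine KKT algebra. The paper handles it with the corollary $C_K\neq 0$ following Lemma \ref{lem:dichotomy}, together with Lemmas \ref{lem:segment} and \ref{lem:CKpos}.

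The ingredients you named do close the gap in a few lines. Suppose $m\le 0$.
\begin{enumerate}
\item For $i\notin K$ you have $\sigma^2 m\beta_i\ge\lambda>0$, which forces $m<0$ and $\beta_i<0$. So every inactive asset has negative beta, and if $m=0$ there are no inactive assets.
\item The identity $\lambda C_K=\sigma^2 m B_K\le 0$ gives $C_K\le 0$.
\item Hence $\sum_{i=1}^p\beta_i/\delta_i^2=C_K+\sum_{i\notin K}\beta_i/\delta_i^2\le 0$, contradicting \eqref{eq:positive}.
\end{enumerate}
So $m>0$ and $C_K>0$. Then $K=\{i:\beta_i<\tau\}$ is an initial segment $\{1,\dots,k\}$ by the ordering of the betas, and the rest of your argument goes through. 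This direct argument is arguably cleaner than the paper's, which first shows that $K$ would be a terminal segment if $C_K<0$.

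If you prefer to rely on Theorem \ref{thm:1} instead, you still have to extract the sign from its statement:
\begin{itemize}
\item For $k<p$, $R_k>0\ge R_{k+1}$ together with $R_{k+1}-R_k=(\beta_k-\beta_{k+1})\sum_{j\le k}\beta_j/\delta_j^2$ forces $\sum_{j\le k}\beta_j/\delta_j^2>0$.
\item For $k=p$, this is the standing assumption.
\end{itemize}
Your identity then gives $m>0$.
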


The solution of problem \eqref{eq: prob 1} in semi-explicit form was previously described by R. Clarke, H. de Silva, and S. Thorley in \cite{cst2011}. They give the following condition, which is closely related to
Corollary \ref{cor:threshold}:
\begin{equation} 
    w_i > 0 \mbox{ if and only if } \beta_i < \tau,
\end{equation}
where $\tau$ is the solution of the equation
\begin{equation}
    \tau = \frac{\frac{1}{\sigma^2}+\sum_{\beta_i<\tau} \frac{\beta^2_j}{\delta_j^2}}{\sum_{\beta_i< \tau}\frac{\beta_j}{\delta_j^2}}.
\end{equation}
The solution described in  \cite{cst2011}  is equivalent to \eqref{eq:closed_solution} and \eqref{eq:thresholdL}. Both solutions require the assumption \eqref{eq:positive}, as a simple argument shows:
 In the absence of any hypotheses about the signs of the betas, replacing $\beta$ by $-\beta$ leaves the covariance matrix $\Sigma$, and hence the solution $w$, unchanged. But in that case the long positions of $w$ would correspond to betas above a threshold, not below it, contradicting the threshold condition.

The need for assumption \eqref{eq:positive} is easily overlooked because it likely always holds for betas actually observed in the market. In our one-factor world, if the betas are defined relative to a benchmark portfolio $w_B$ that belongs to our investable universe of $p$ assets, i.e. 
\begin{equation}
  w_B \in \mathbf{R}^p \mbox{ and }  \beta = \frac{\Sigma w_B}{\sigma_B^2},
\end{equation}
then a calculation similar to the ones in Section \ref{sec:proof} shows that \eqref{eq:positive} holds if and only if the benchmark $w_B$ is net long, $w_B^\top \bfo_p >0$.
This will be the case for any reasonable benchmark.

The closed form solution \eqref{eq:closed_solution} depends on first determining $k$ from \eqref{eq:kdef}. From the monotonicity properties of $\{R_i\}$, this may be quickly accomplished, for example, by the bisection method in $O(\log p)$ steps.
\medskip

{\bf The multifactor case.} When there are $q>1$ factors, there is no simple way to order the betas or immediately use them directly to identify the active assets in the long-only portfolio.  The next theorem tells us that we do know something about them: their corresponding betas are exactly those that lie in a particular half-space in $\mathbf{R}^q$.
\medskip

{\bf Assumptions for Theorem \ref{thm:multi}.} 
We assume asset returns follow a $q$-factor model, so that the covariance matrix of asset returns takes the form of the $p \times p$ matrix
\begin{equation} \label{eq:multi-cov}
    \Sigma = B\Omega B^\top + \Delta,
\end{equation}
where $B$ is a $p \times q$ matrix whose columns are the asset exposures to the $q$ factors, $\Omega$ is a $q \times q$ diagonal matrix of positive factor variances, and $\Delta$ is a $p \times p$ diagonal matrix of positive asset specific variances.

As before, for any subset $K$ of the indices
$\{1,2,\dots,p\}$, if $k = |K|$, we denote by $\Sigma^{K,0}$ the $p \times p$ matrix obtained from $\Sigma$ by setting all the rows and columns not in $K$ to zero, $\Sigma^K$ the $k \times k$ principal submatrix obtained by deleting the rows and columns not in $K$, and $B^K$ the $k \times q$ matrix obtained by deleting the rows of $B$ not in $K$.

\begin{theorem}[\bf Hyperplane separation for $q$-factor models]
\label{thm:multi}

Suppose that $w^L$ is the solution of problem \eqref{eq: prob 1} for the covariance matrix \eqref{eq:multi-cov}.  Let
\begin{equation}
    K = \{i \leq p: w^L_i >0 \}
\end{equation}
and $k = |K|$ the cardinality of $K$.

Define the $q$-dimensional column vector $h^K$ by
\begin{equation} \label{eq:hk}
    h^K = \Omega (B^K)^\top (\Sigma^K)^{-1} \bfo_k = \Omega B^\top (\Sigma^{K,0})^+ \bfo_p.
\end{equation}
Then 
\begin{equation}
    w^L_i > 0 \text{ if and only if }  B_i h^K < 1,
\end{equation}
where $B_i$ the $q$-dimensional $i$th row of $B$.
\end{theorem}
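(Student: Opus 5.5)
The plan is to read the statement off the Karush--Kuhn--Tucker conditions for problem \eqref{eq: prob 1}, combined with the explicit form of $w^L$ given by Theorem \ref{thm:wL}. The factor structure \eqref{eq:multi-cov} then turns the KKT conditions into the linear condition on $B_i$.

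First I dispose of the two expressions for $h^K$ in \eqref{eq:hk}. Up to a permutation of indices, $\Sigma^{K,0}$ is block diagonal with blocks $\Sigma^K$ and $0$. Hence $(\Sigma^{K,0})^+$ is $(\Sigma^K)^{-1}$ padded with zeros. It follows that $(\Sigma^{K,0})^+\bfo_p$ is the vector $(\Sigma^K)^{-1}\bfo_k$ with zeros inserted, and therefore $B^\top(\Sigma^{K,0})^+\bfo_p=(B^K)^\top(\Sigma^K)^{-1}\bfo_k$. Next, set $c=\bfo_k^\top(\Sigma^K)^{-1}\bfo_k$, which is positive because $\Sigma^K$ is positive definite. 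By Theorem \ref{thm:wL}, $w^L=(\Sigma^{K,0})^+\bfo_p/c$. This gives the key identity
\[
\Omega B^\top w^L=\Omega (B^K)^\top w^K = h^K/c .
\]

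Second, I write down the KKT conditions. The problem is a strictly convex quadratic program with linear constraints, so KKT is necessary and sufficient. There exist $\lambda\in\mathbf{R}$ and $\mu\ge 0$ with $\Sigma w^L=\lambda\bfo_p+\mu$ and $\mu_iw^L_i=0$. For $i\in K$ we have $\mu_i=0$, so $(\Sigma w^L)_i=\lambda$. On $K$ this reads $\Sigma^Kw^K=\lambda\bfo_k$, and hence $\lambda=1/c$. For $i\notin K$ we get $(\Sigma w^L)_i=\lambda+\mu_i\ge 1/c$.

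Third, I expand with the factor form: $(\Sigma w^L)_i=B_i\Omega B^\top w^L+\delta_i^2w^L_i=B_ih^K/c+\delta_i^2w^L_i$.
\begin{itemize}
\item If $i\in K$, then $B_ih^K/c+\delta_i^2w^L_i=1/c$. This gives $B_ih^K=1-c\,\delta_i^2w^L_i<1$, since $c>0$, $\delta_i^2>0$ and $w^L_i>0$.
\item If $i\notin K$, then $w^L_i=0$ and $B_ih^K/c\ge 1/c$, so $B_ih^K\ge1$.
\end{itemize}
Together these give both directions of the equivalence.

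The only delicate point is obtaining the strict inequality $B_ih^K<1$ on $K$. The positivity of the specific variances $\delta_i^2$ is exactly what separates active from inactive assets there. Everything else is routine bookkeeping: the Moore--Penrose padding and the identification $\lambda=1/c$. Note that the argument does not use the diagonality of $\Omega$.
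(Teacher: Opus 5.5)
Your proof is correct and follows essentially the same route as the paper: both use the KKT stationarity condition with the multiplier identified as $1/c$, where $c=\bfo_k^\top(\Sigma^K)^{-1}\bfo_k$, to get $B_i h^K \ge 1$ off $K$. On $K$, your expansion $(\Sigma w^L)_i = B_i h^K/c + \delta_i^2 w^L_i$ is the same computation as the paper's $B^K h^K = \bfo_k - c\,\Delta^K w^K$, which the paper obtains from $B^K\Omega(B^K)^\top = \Sigma^K - \Delta^K$. Your check of the Moore--Penrose form of $h^K$ and your remark that diagonality of $\Omega$ is never used are correct, and the paper leaves both implicit.
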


Geometrically, the condition of Theorem \ref{thm:multi} can be described  in terms of the hyperplane $H$ of $\mathbf{R}^q$ defined by
\begin{equation}
    H = \{x \in \mathbf{R}^q: x^\top h^K = 1 \}.
\end{equation}
It says that the assets included in the long-only optimal portfolio are those whose factor exposure vectors $B_i$ in $\mathbf{R}^q$ lie on the same side of $H$ as the origin.  See Figure \ref{fig:separation} for an illustration when $q=2$.

{
The next corollary gives alternative hyperplane formulation that is sometimes useful.
\begin{corollary} \label{cor:alt}
    With the same assumptions and notation of Theorem \ref{thm:multi}, let
\begin{eqnarray}\label{eq:sephyp}
    h^L &=& \left(\Omega^{-1} + (B^K)^\top (\Delta^K)^{-1} B^K\right)^{-1} (B^K)^\top (\Delta^K)^{-1} \bfo_k \\
    &=& \left(\Omega^{-1} + B^\top (\Delta^{K,0})^+ B \right)^{-1} B^\top (\Delta^{K,0})^+ \bfo_p.
\end{eqnarray}
Then $B^K h^L = B^K h^K$, and hence
    $w^L_i > 0$  if and only if  $B_i h^L < 1.$

\end{corollary}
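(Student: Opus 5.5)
We prove Corollary \ref{cor:alt} by a Woodbury computation on the reduced $k\times k$ problem, showing $B^K h^L = B^K h^K$ as $k$-vectors. The separation statement then follows from Theorem \ref{thm:multi}.

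To obtain the separation statement, note that Theorem \ref{thm:multi} characterizes membership in $K$ only through the scalar $B_i h^K$. That scalar must be compared with $1$ for \emph{every} $i$, not just for $i\in K$. So from the equality $B^K h^L = B^K h^K$ alone we cannot yet conclude. The cleaner route is to show that $h^L = h^K$ as vectors in $\mathbf{R}^q$ whenever $B^K$ has full column rank. In general, we instead prove the vector identity $h^K = h^L$ directly, since neither definition needs rank assumptions. Either way, for every $i\le p$ we get $B_i h^L = B_i h^K$, and Theorem \ref{thm:multi} gives $w^L_i>0 \iff B_i h^L<1$.

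The core identity rests on three abbreviations:
\[
M = \Omega^{-1} + (B^K)^\top(\Delta^K)^{-1}B^K, \qquad u = (B^K)^\top(\Delta^K)^{-1}\bfo_k, \qquad N = (B^K)^\top(\Delta^K)^{-1}B^K .
\]
The Woodbury identity applied to $\Sigma^K = B^K\Omega (B^K)^\top + \Delta^K$ gives
\[
(\Sigma^K)^{-1} = (\Delta^K)^{-1} - (\Delta^K)^{-1}B^K M^{-1}(B^K)^\top(\Delta^K)^{-1}.
\]
Left-multiplying by $\Omega(B^K)^\top$ and right-multiplying by $\bfo_k$ yields
\[
h^K = \Omega u - \Omega N M^{-1} u = \Omega\,(M - N)\,M^{-1}u .
\]
Since $M - N = \Omega^{-1}$, this collapses to $h^K = M^{-1}u = h^L$.

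Multiplying $h^K = h^L$ on the left by $B^K$ gives the claimed $B^K h^L = B^K h^K$, and by any row $B_i$ gives the equivalence. The second expression for $h^L$ in \eqref{eq:sephyp} is routine. It holds because $(\Delta^{K,0})^+$ is the diagonal matrix with entries $1/\delta_i^2$ for $i\in K$ and $0$ otherwise, so
\[
B^\top(\Delta^{K,0})^+B = (B^K)^\top(\Delta^K)^{-1}B^K, \qquad B^\top(\Delta^{K,0})^+\bfo_p = (B^K)^\top(\Delta^K)^{-1}\bfo_k .
\]

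The main point needing care is invertibility of $M$. This holds because $\Omega^{-1}$ is positive definite and $N$ is positive semidefinite. The Woodbury step also needs $\Omega$ invertible, which is assumed. The algebraic collapse $\Omega(M-N)=I$ is the key step and is immediate once written this way.
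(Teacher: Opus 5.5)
Your proof is correct and complete. It also establishes more than the paper states.

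The paper writes out no separate proof of this corollary. Its phrasing, together with the remark that ``in the typical case where $B^K$ has full rank $q<k$, $B^K h^L = B^K h^K$ implies $h^L = h^K$,'' suggests the intended route works with $k$-vectors. The proof of Theorem~\ref{thm:multi} already gives $B^K h^K = \bfo_k - \Delta^K(\Sigma^K)^{-1}\bfo_k$. Woodbury then gives $\Delta^K(\Sigma^K)^{-1}\bfo_k = \bfo_k - B^K M^{-1}u = \bfo_k - B^K h^L$, so the two $k$-vectors agree.

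As you correctly observe, that identity alone settles the criterion only for $i \in K$. The ``hence'' for $i \notin K$ needs $h^L = h^K$ in $\mathbf{R}^q$, which the paper obtains only under the full-column-rank hypothesis. Your step instead multiplies the Woodbury expansion on the left by $\Omega (B^K)^\top$ and uses $\Omega(M-N) = I$. This is the push-through identity, and it yields $h^K = h^L$ with no rank assumption. It needs only invertibility of $\Omega$ and positive definiteness of $M$, both of which you justify. So the separation criterion holds for every $i \le p$, and the paper's full-rank caveat is unnecessary.

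The paper's route has the small advantage of reusing an identity already derived in the proof of Theorem~\ref{thm:multi}. Yours is just as short and closes the gap for $i \notin K$. One small point: your sentence about first proving $h^L = h^K$ ``whenever $B^K$ has full column rank'' is superseded by the unconditional argument that follows, and could be dropped.
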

We note that in the typical case where  $B^K$ has full rank $q<k$,
$B^K h^L = B^K h^K$ implies $h^L = h^K$, so the the formulations of Theorem \ref{thm:multi} and Corollary \ref{cor:alt} are equivalent.  

}


Unlike Theorem \ref{thm:1}, Theorem \ref{thm:multi} does not provide a direct computational method for determining $w^L$. Instead, it provides a necessary condition satisfied by the active assets in terms of their $q$-vectors $B_i$, $i=1,\dots,p$, as a generalization of the 1-factor setting: the included $B_i$ are the ones below a certain threshold, where the threshold in $q$ dimensions is determined by a $(q-1)$-dimensional hyperplane $H$.

{
In typical applications, such as in the illustration below in section \ref{sec:2factorexample}, $p$ is much greater than the number $n$ of samples. In this case, when the factor exposures are bounded and have a positive variance across assets, we expect that $B^K (\Delta^K)^{-1}B^K$ will dominate $\Omega^{-1}$ by a factor proportional to $p$. In the limiting case where $\Omega^{-1}$ is set to zero, $h^L$ in equation \eqref{eq:sephyp} can be viewed as a vector of coefficients of a weighted least squares regression of $\bfo$ onto the columns of $B^K$ with weight matrix $\Delta^{-1}$.\footnote{We thank Lisa Goldberg for this observation.}
}

In the one-factor case $q=1$, the condition of Theorem \ref{thm:multi} reduces
to \eqref{eq:thresholdL}: in this case, $B$ is a single column vector corresponding to $\beta$ in the single index model, and $\Omega$ is a scalar $\sigma^2$. A computation using the Woodbury identity for the inverse of $\Sigma^K$,
\begin{eqnarray}
    (\Sigma^K)^{-1} = (\Delta^K)^{-1}\{I - B^K\left[\frac{1}{\sigma^2} + (B^K)^\top (\Delta^K)^{-1} B^K\right]^{-1}(B^K)^\top (\Delta^K)^{-1}\},
\end{eqnarray}
shows that the scalar $h^K = \sigma^2 (B^K)^\top (\Sigma^K)^{-1} \bfo_k$ is given by
\begin{equation}
   h^K = \frac{\sum_{j \in K} \frac{B_j}{\delta_j^2}}{\frac{1}{\sigma^2} + \sum_{j \in K} \frac{B_j^2}{\delta_j^2}}
\end{equation}
and the condition $B_i h^K < 1$ is equivalent to \eqref{eq:thresholdL}.

\section{Numerical Examples}

\subsection{Single factor exposure estimation} \label{subsec:beta_est}

Given observed excess returns of $p$ assets over $n$ observation periods, we may form the $p \times n$ data matrix $Y$ and the resulting sample covariance matrix $S = \frac{1}{n} YY^\top$.

In the likely event that $p > n$, the matrix $S$ will be rank deficient, so we need a model for estimating an invertible covariance matrix for use in portfolio optimization.

A standard approach is to use factor models to accomplish this.  In this section we  consider a one-factor model of returns,
\begin{equation} \label{eq:return-model}
    r = \beta f + \epsilon
\end{equation}
where $\beta$ is an unknown vector of factor exposures, $f$ is a random variable with mean zero and variance $\sigma^2$ representing the factor return, and $\epsilon$ is a random vector whose entries are mutually independent and independent of $f$, with mean zero and covariance $\Delta$. The columns of $Y$ are then assumed to be $n$ independent realizations of $r$.  The covariance matrix of $r$ is given by
\eqref{eq:singlefactor}:
\begin{equation*}
    \Sigma = \sigma^2 \beta \beta^\top + \Delta.
\end{equation*}

Setting $\zeta^2 = |\beta|^2 \sigma^2$ and $b = \beta/|\beta|$, the one-factor covariance model may be written
\begin{equation}\label{eq:pop_cov}
    \Sigma = \zeta^2 bb^\top + \Delta,
\end{equation}
where recall $\Delta = diag(\delta_1^2,\dots,\delta_p^2)$.
Only $Y$ and the corresponding sample covariance matrix $S$ are observed. Estimating $\Sigma$ requires estimating the scalar $\zeta^2$, the unit vector $b$, and the vector $(\delta_1^2,\dots,\delta_p^2)$ of idiosyncratic variances.

For the purpose of computing a minimum variance portfolio, we will consider three different data-driven estimators of \eqref{eq:pop_cov}: $\sjse, \smjse$, and $\sms$, defined as follows.

The estimator $\sjse$ is a Bayesian-style James-Stein shrinkage estimator
 with the advantage that the shrinkage takes place entirely inside the class of single-factor models:
\begin{equation} \label{eq: jse}
    \sjse = \eta^2 \hjse (\hjse)^\top + \djse ,
\end{equation}
where $\eta^2, \hjse, \djse$ are estimators of $\zeta^2, b, \Delta$ described further in Section \ref{sec:beta_est}.
The JSE estimators are designed for large $p >> n$ and correct for high-dimensional statistical bias in the sample eigenvectors.  Suitable parameters are $p=1000$, $n = 126$ as in the empirical illustrations in the next section.
This is a single-factor model where the normalized factor loadings $\hjse$ are asymptotically good estimates of the unknown population factor exposure unit vector $b = \beta/|\beta|$ responsible for generating the observed returns via \eqref{eq:return-model}.

The remaining two estimators $\smjse, \sms$ are single-index market models in the manner of \cite{sharpe1963} and as used by \cite{cst2011}.

Given any data-driven estimator $\hat \Sigma$ of $\Sigma$, and cap-weighted market portfolio $w_M$, we may define the estimated market variance and market beta by
\begin{equation} \label{eq:mkt_sigma}
    \sigma_M^2 = w_M^\top \hat \Sigma w_M
\end{equation}
and 
\begin{equation} \label{eq:mkt_beta}
    \beta^M = \frac{\hat \Sigma w_M}{\sigma_M^2}.
\end{equation}

From these, we form a single index market covariance matrix $\Sigma^{M,{\hat \Sigma}}$ depending on $w_M$ and $\hat \Sigma$:
\begin{equation} \label{eq:mmodel}
    \Sigma^{M,{\hat \Sigma}} = \sigma_M^2 \beta^M (\beta^M)^\top + \Delta^M ,
\end{equation}
where
\begin{equation} \label{eq:delta-m}
  \Delta^M = diag((\delta^M_i)^2), \quad  (\delta^M_i)^2 = S_{ii} - (\beta^M_i)^2 \sigma_M^2, 
\end{equation}
or, equivalently,
\begin{equation}
    \Sigma^{M,{\hat \Sigma}} = \zeta_M^2 b^M(b^M)^\top + \Delta^M
\end{equation}
where
\begin{equation}
    \zeta_M^2  = \sigma_M^2 |\beta^M|^2, \quad b^M = \beta^M/|\beta^M|.
\end{equation}

We examine the choices $\hat \Sigma = \sjse$ and $\hat \Sigma = S$, and define
\begin{eqnarray}
    \smjse &=& \Sigma^{M,{\sjse}} \\
    \sms &=& \Sigma^{M,S}.
\end{eqnarray}

It can be shown that $\smjse$ and $\sjse$ as asymptotically consistent (as $p \to \infty$) in a sense described in Section \ref{subsec:consistency} below.



\medskip
\subsection{Empirical example for a single index model} \label{sec:empirical}
In this section we apply our results in an empirical illustration using daily returns of the top $p=1000$ US stocks by market capitalization for the period January 3, 2022 to July 1, 2022.\footnote{Returns were gathered from the CRSP database via the Wharton Research Data Service, then cleaned and centered before use. We removed one smaller stock of a firm with artificially low volatility due to an imminent merger, and added the next largest asset to keep the total at 1,000.}
 
From the market capitalizations $mc(i), i=1,\dots,p$, at the beginning of the period, we determine the market portfolio $w_M$ by
\begin{equation}
    w_M(i) = \frac{mc(i)}{\sum_{i=1}^p mc(i)}.
\end{equation}
We then use the formulas of section \ref{subsec:beta_est} to determine three choices\footnote{\cite{cst2011} use a different choice of single-index market covariance matrix by taking $\hat \Sigma$ to be a Ledoit-Wolf shrinkage estimator, see \cite{ledw2004}. The empirical outcomes are similar to the ones reported here.}
of the data-driven covariance estimator for computing the LOMV portfolio: $\sjse$, $\smjse$, and
$\sms$.

The outcomes for the resulting long only portfolio size, market factor variance, and estimated betas are summarized in Table 1.
For $\sjse = \eta^2 \hjse {\hjse}^\top + \djse$, the market quantities $\sigma_M^2$ and $\beta^M$ are undefined and $\hjse$ is scaled as a unit vector by convention.


\begin{table}[H] \label{table:three_cases}
\begin{center}
\begin{tabular}{c || c c c}
estimator & $k$ & $\sigma^2_M$(\%) & mean$(\beta^M)$ \\
\hline 
$\smjse $&65 & 5.8 & 1.08 \\
$\sms$&51 & 6.2 & 1.13 \\
$\sjse$&66 & * & * \\
\end{tabular}
\end{center}
\caption{Outcomes for each of three choices of covariance estimator. The value $k$ is the number of active assets in the long-only minimum variance portfolio (out of 1000). $\sigma^2_M$ and $\beta^M$ are the derived parameters according to equations \eqref{eq:mkt_sigma} and \eqref{eq:mkt_beta} but does not apply to the last row.}
\end{table}

Table 2 shows that the three methods mostly agree on which active assets should be included in the long-only minimum variance portfolio.

\begin{table}[H] \label{table:common_assets}
\begin{center}
\begin{tabular}{c || c c c c c c c c c}
portfolio & MJSE  & MS &JSE  & MJSE $\cap$ MS & MJSE $\cap$ JSE & JSE $\cap$ MS \\
\hline 
\# assets & 65 & 51 & 66 & 49 &65 &49 \\
\end{tabular}
\end{center}
\caption{The number of assets that each of the three estimated minimum variance portfolios have in common, out of 1000 total assets considered. }
\end{table}

Figures \ref{subfig:mjse-v-jse} and \ref{subfig:mjse-v-ms} show scatterplots demonstrating that the portfolios for the market model $\smjse$ and the statistical factor model $\sjse$ are almost the same in this experiment, but the $\sms$ portfolio noticeably differs.

\begin{figure}[H]
\centering
\begin{subfigure}{0.48\textwidth}
\centering
\includegraphics[width=\linewidth]{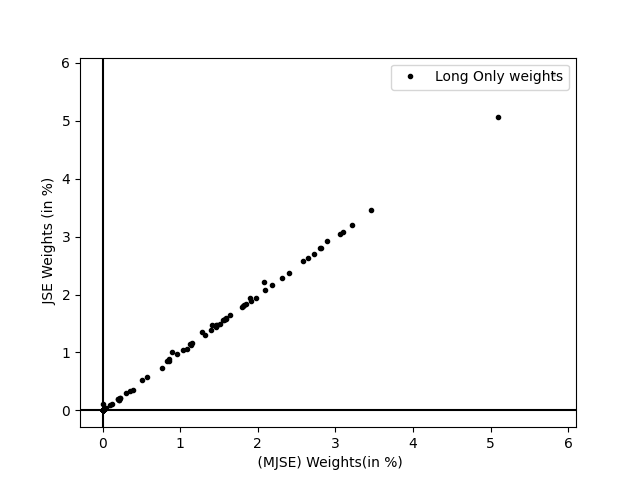}
\caption{Weights of MJSE vs JSE portfolios.}
\label{subfig:mjse-v-jse}
\end{subfigure}%
\begin{subfigure}{0.48\textwidth}
\centering
\includegraphics[width=\linewidth]{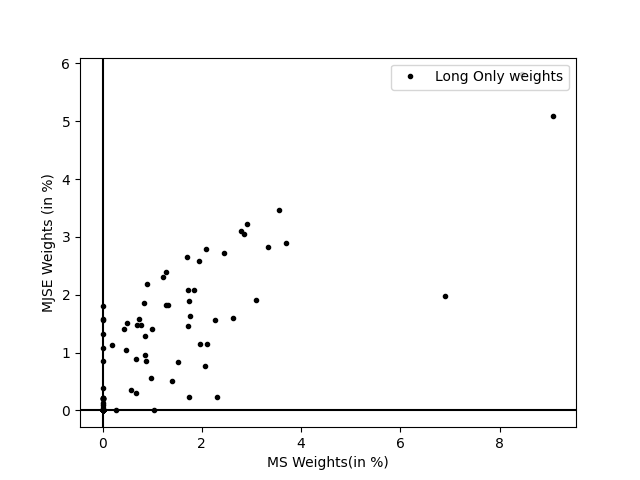}
\caption{Weights of MS vs MJSE portfolios.
}
\label{subfig:mjse-v-ms}
\end{subfigure}
\caption{Comparison of portfolio asset weights for the three portfolios MJSE, JSE, and MS, plotted in percent.}
\label{fig:double}
\end{figure}

Figures \ref{fig:w-v-beta}, \ref{fig:w-v-delta}, and \ref{fig:delta-v-beta} below illustrate the relationships between estimated market beta, portfolio weight, and idiosyncratic risk for the MSJE portfolio.  (Plots look similar for the other portfolios.)

The betas range between $-0.05$ and $3.62$ with the only negative beta being approximately $-0.059$. The maximum beta in the long-only portfolio is 0.281. The idiosyncratic risk ranges from $0$ to $46\%$. 
Figures \ref{fig:w-v-beta} and \ref{fig:w-v-delta} show the 1000 individual asset weights under the single-factor model for both the long-short (blue dots) and the long-only (black dots) minimum-variance portfolio plotted against market beta and delta.  The weights of the active assets in the long-only portfolio had a maximum value of $5.81\%$. 

Of interest is the fact that only 65 of the 1000 securities were active in the long-only portfolio. 
  In addition to having lower beta, assets with low idiosyncratic risk are more likely to be active in the long-only portfolio. We also see that assets with lower idiosyncratic risk tend to have higher absolute value weights in the long-short portfolio.

\begin{figure}[H]
\centering    
    \includegraphics[width=.7\textwidth]{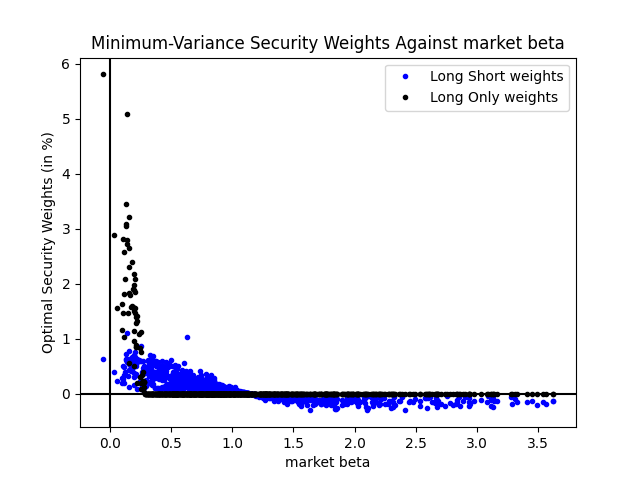}\hfill

    \caption{MJSE portfolio weights against market beta for the top 1000 US stocks, estimated for daily returns in the first half of 2022.}\label{fig:w-v-beta}
\end{figure}

\begin{figure}[H]
 \centering  
    \includegraphics[width=.7\textwidth]{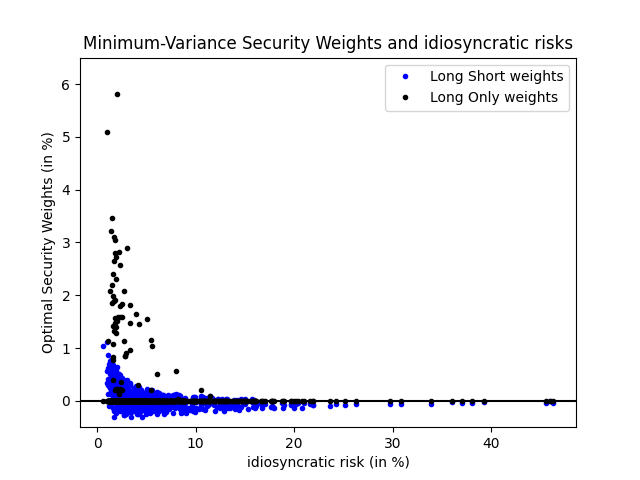}\hfill

    \caption{MJSE portfolio weights against specific risk for the top 1000 US stocks, from daily returns in the first half of 2022.}\label{fig:w-v-delta}
\end{figure}

\begin{figure}[H]
\centering   
    \includegraphics[width=.7\textwidth]{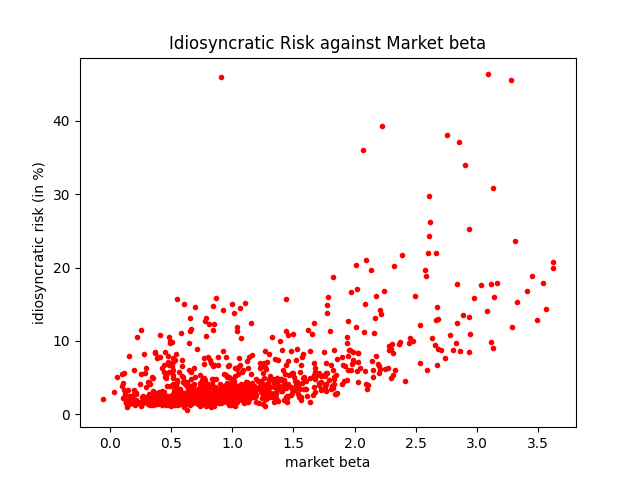}\hfill

    \caption{MJSE market beta against specific risk for the top 1000 US stocks in the first half of 2022.}\label{fig:delta-v-beta}
\end{figure}

\subsection{Empirical example for a two-factor model} \label{sec:2factorexample}

To illustrate the hyperplane separation of factor loadings described in Theorem \ref{thm:multi}, we fit a two-factor model ($q=2$) to the same daily returns of the 1,000 stocks used in Section \ref{sec:empirical}. Starting from
the leading two eigenvalue-eigenvector pairs of the sample covariance matrix,
as described further in Section \ref{subsec:JSM}, we apply the James-Stein-Markowitz (JSM) multifactor shrinkage method of \cite{skggb2024} to obtain the covariance estimate
\begin{equation}
    \sjsm = B \Omega B^\top + \Delta ,
\end{equation}
 where $B$ is a $2 \times p$ matrix with orthogonal columns, obtained via shrinkage from the leading two sample eigenvectors; $\Delta$ is a diagonal matrix of specific variances; and  $\Omega$ is a $2 \times 2$ diagonal matrix. 
 The row $B_i$ of asset $i$ is that asset's exposure vector to the two factors.  The resulting long-only minimum variance portfolio is computed by means of the open source convex optimization package cvxpy (www.cvxpy.org).

The results are displayed in the left hand plot of Figure \ref{fig:separation} below, in which the horizontal axis shows the exposure to the leading (market) factor (the first component of $B_i$), and the vertical axis to the second factor.   Orange indicates assets included in the LOMV portfolio based on the model, and blue indicates excluded assets. The solid red line is the hyperplane of Theorem \ref{thm:multi} separating the active and inactive assets in the LOMV portfolio, while the dotted red line indicates the single-index model threshold value that would apply for the same data with a one-factor model.  We observe a relatively small number of assets in the LOMV portfolio.

The right-hand plot in Figure \ref{fig:separation} shows a close-up of the LOMV portfolio's active assets with the portfolio weights coded as marker size and color.  The largest weight is 10\%.  The portfolio weight of an asset is proportional\footnote{With a proportionality constant that is the same for all assets. This follows from a computation using the Woodbury identity.}
to the ratio $d_i/\delta_i^2$, where $\delta_i^2$ is the specific variance of asset $i$, and $d_i$ is the perpendicular distance from the asset's beta point $B_i$ to the separating hyperplane.  {Although the active asset portfolio weights shown in Figure \ref{fig:separation} generally vary inversely with the distance from the separating hyperplane, unusually large or small specific variances can play a dominant role, as is the case for three assets with large exposure to the second factor.}

\begin{figure}[H]
 \centering  
    \includegraphics[width=\textwidth]{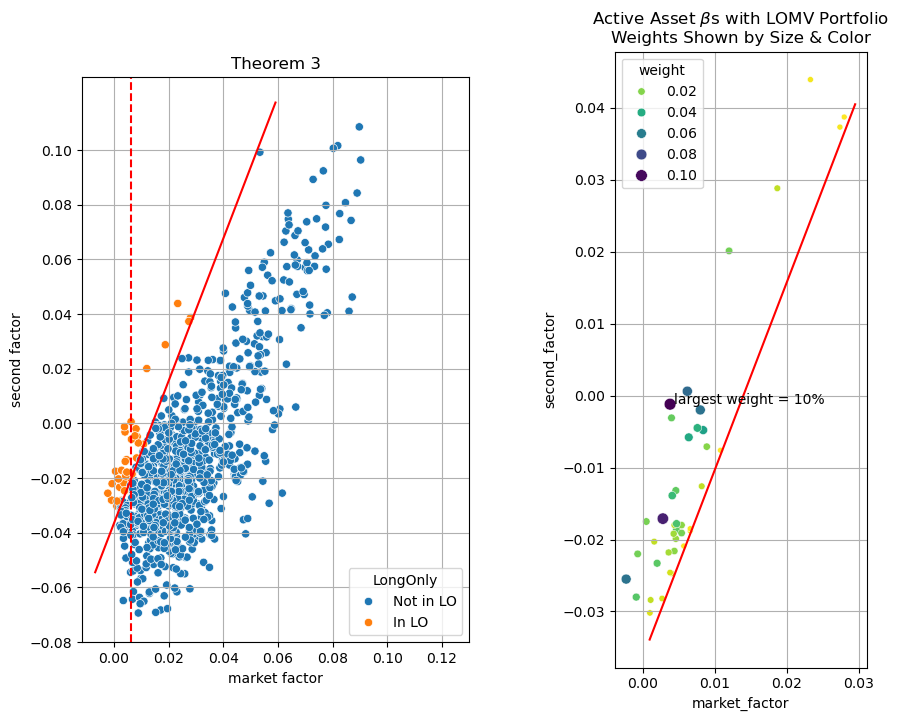}\hfill

    \caption{2-vector exposures for each asset from a statistical 2-factor JSM return covariance matrix, plotted for each of the 1000 assets. On the right is a close-up showing portfolio weights.}\label{fig:separation}
\end{figure}

An interesting comparison with the portfolio resulting from a single-index model is evident by examining the vertical dotted red line {in the left-hand plot}, which, {as stated above}, is the one-factor threshold for the same returns data. The 1-factor portfolio excludes a few of the assets with higher exposure to both factors and includes significantly more with {negative} exposure to the second factor.  For the two-factor model portfolio, exposure to the second factor can {compensate} for a {fairly} significant positive exposure to the 
first factor, such as for the two {uppermost} orange assets {in left-hand plot}. These assets are significantly into the excluded region defined by the dotted line.  Overall, the 2-factor model includes {41} assets, {24} fewer than the 65 assets of one-factor model.

The angle between the two lines could be considered a measure of the importance of the second factor in determining active long-only assets. The angle here is 21.5\% of a right angle, which is relatively significant.

\section{Proofs} \label{sec:proof}

\subsection{Proof of Theorem \ref{thm:wL}}

Notation: $\bfo_p$ is the $p$-dimensional vectors of all ones, and similarly $\bfz_p$ for all zeros.

The solution $w=w^L$ of our constrained optimization problem \eqref{eq: prob 1} satisfies the well-known Karush-Kuhn-Tucker (KKT) conditions\footnote{See, for example, \cite{boyd-vandenberghe2004} or \cite{beck2023}.}, which are a set of equations in $w$, an auxiliary $p$-vector ${\lambda}$, and an auxiliary scalar $\nu$ (the Lagrange multipliers) as follows:

    \begin{align}
           &2\Sigma w- {\lambda} +\nu \bfo_p=\bfz_p \label{eq:KKT}\\ 
           & w^\top \bfo_p  =1 \label{eq:KKTb}\\
           &\lambda_i w_i =0 \text{ }i=1,2,...,p \label{eq:KKTc}\\ 
           &\lambda_i \geq 0, w_i \geq 0 \text{ } i=1,2,...,p. \label{eq:KKTd}
    \end{align}

These KKT conditions include $2p+1$  equations \eqref{eq:KKT} -- \eqref{eq:KKTc} in the $2p+1$ unknowns $$w_1,\dots,w_p,\lambda_1,\dots,\lambda_p, \nu,$$ along
with $2p$ inequality constraints \eqref{eq:KKTd}.

Define the (necessarily non-empty) set
\begin{equation}
    K = \{ i \leq p: w_i > 0 \}.
\end{equation}

Let $k >0$ denote the cardinality of $K$. For any vector $x \in \mathbf{R}^p$, denote by $x^K \in \mathbf{R}^k$ the $k$-dimensional vector obtained from $x$ by deleting the entries $x_j$ for all $j \notin K$.
Likewise, for any $p \times p$ matrix $M$, denote by $M^K$ the $k \times k$ principal submatrix obtained from $M$ by deleting all the rows and columns with indices outside $K$.


Since $\Sigma$ is symmetric positive definite, so is the submatrix $\Sigma^K$, and hence $\Sigma^K$ is invertible. 
Further,
\begin{equation}
    w^\top \Sigma w = (w^K)^\top \Sigma^K w^K.
\end{equation}

Since $\lambda_i = 0$ for all $i \in K$, taking the $k$ rows of equation \eqref{eq:KKT} corresponding to indices in $K$ tells us
\begin{equation} \label{eq:ellKKT}
    2\Sigma^K w^K + \nu \bfo_k = \bfz_k,
\end{equation}
where here the vectors $\bfo_k$ and $\bfz_k$ are $k$-dimensional.  Multiplying \eqref{eq:ellKKT} on the left by $\bfo_k^\top (\Sigma^K)^{-1}$ and using $\bfo_k^\top w^K = 1$, we obtain
\begin{equation} \label{eq:nu}
    \nu = \frac{-2}{\bfo_k^\top (\Sigma^K)^{-1} \bfo_k}
\end{equation}
and therefore
\begin{equation} \label{eq:w*}
    w^K = \frac{(\Sigma^K)^{-1}\bfo_k}{\bfo_k^\top (\Sigma^K)^{-1}\bfo_k} ,
\end{equation}
or, equivalently,
\begin{equation}
    w^L  = \frac{(\Sigma^{K,0})^+\bfo_p}{\bfo_p^\top(\Sigma^{K,0})^+\bfo_p} .
\end{equation}

\subsection{Proof of Theorem \ref{thm:1}}

Part 1 of the theorem follows solely from the monotonicity of the sequence $\{\beta_i\}$.
For convenience, define
\begin{equation}
    C_i = \sum_{j=1}^{i}\frac{\beta_j}{\delta_j^2}.
\end{equation} 
It is easy to verify, for all $i= 1,\dots, p-1$, that
\begin{equation}
  R_{i+1} - R_i = (-\beta_{i+1} + \beta_i) C_i.    
\end{equation}
 Since $(-\beta_{i+1} + \beta_i)$ is always non-positive, $R_{i+1} - R_i \leq 0$ when $C_i \geq 0$ and $R_{i+1} - R_i \geq 0$ when $C_i \leq 0$.

Let $s = \min\{i : C_i > 0\}$, $1 \leq s \leq p$.
If $j < s$, then $C_j \leq 0$ by definition of $s$, so $R_{j+1}-R_j \geq 0$. Since $\beta_s >0$, $C_i$ is increasing for $i \geq s$, so if $j \geq s$ then $C_j > 0$, and we have $R_{j+1} - R_j \leq 0$.  This establishes the conclusion of part 1.
\medskip

We proceed to Part 2. 
Our goal now is to determine $K = \{i \leq p : w^L_i > 0 \}$ in explicit form in term of the parameters $\sigma, \beta, \delta$ of the problem.

Let $k = |K|$ and define
\begin{equation}
    \ell = \max \{i \leq p: R_i >0 \}.
\end{equation}
We will complete the proof by establishing
$$ k = \ell  \mbox{ and } K = \{1,2,\dots, k\}.$$

Let $\beta^K \in \mathbf{R}^k$ and the $k \times k$ matrix $\Sigma^K$ be determined from $K$ as before, obtained from $\beta$ and $\Sigma$ by deleting rows or rows and columns corresponding to indices outside $K$.

By the Woodbury identity, 
$$(\Sigma^K)^{-1}= diag(\frac{1}{(\delta^K)^2})-\frac{(\frac{\beta^K}{(\delta^K)^2})(\frac{\beta^K}{(\delta^K)^{2}})^\top}{\frac{1}{\sigma^2}+ (\frac{\beta^K}{(\delta^K)^2})^\top\beta^K}
$$
where $\frac{1}{({\delta^K})^2} = [\frac{1}{\delta_j^2}: j \in K]^\top$ and
$\frac{\beta^K}{(\delta^K)^{2}}= [\frac{\beta_j}{\delta_j^2}: j\in K]^\top$.

This means
\begin{equation}
    (\Sigma^K)^{-1}\bfo_{k} = \frac{1}{(\delta^K)^2} - \frac{(\frac{\beta^K}{(\delta^K)^2})(\frac{\beta^K}{(\delta^K)^{2}})^\top \bfo_{k}}{\frac{1}{\sigma^2}+ (\frac{\beta^K}{(\delta^K)^2})^\top\beta^K}.
\end{equation}
Now if $i \in K$, then   
\begin{equation} \label{eq:vi}
    ((\Sigma^K)^{-1}\bfo_{k})_i = \frac{1}{\delta_i^2} - \frac{\frac{\beta_i}{\delta_i^2}\sum_{j \in K}\frac{\beta_j}{\delta_j^2}}{\frac{1}{\sigma^2}+\sum_{j\in K} \frac{\beta^2_j}{\delta_j^2}} >0.
\end{equation}
Clearing the positive denominators, we
obtain
\begin{equation} \label{eq:beta-condition}
 \frac{1}{\sigma^2}+\sum_{j\in K} \frac{\beta^2_j}{\delta_j^2}- \beta_i\sum_{j\in K}\frac{\beta_j}{\delta_j^2} > 0   
\end{equation}
or
\begin{equation} \label{eq:BK>CK}
    B_K > \beta_i C_K
\end{equation}
where 
\begin{equation}
 B_K = \frac{1}{\sigma^2}+\sum_{j\in K} \frac{\beta^2_j}{\delta_j^2}, \quad C_K = \sum_{j\in K}\frac{\beta_j}{\delta_j^2}.  
\end{equation}

Now suppose instead $i \notin K$, so $w_i=0$. 
Since $\Sigma w = \sigma^2 \beta \beta^\top w + diag(\delta^2)w$ and $w_j=0$ for all $j \notin K$,
we have
\begin{equation} \label{eq:sigmawi}
    (\Sigma w)_i = \sigma^2 \beta_i \sum_{j \in K} \beta_j w_j.
\end{equation}
Conditions \eqref{eq:KKT} and \eqref{eq:KKTd} tell us
\begin{equation}
    0 \leq \lambda_i = 2(\Sigma w)_i + \nu,
\end{equation}
or, using \eqref{eq:sigmawi},
\begin{equation}
    0 \leq 2 \sigma^2 \beta_i \sum_{j \in K} \beta_j w_j + \nu.
\end{equation}
For $j \in K$, we have, from \eqref{eq:w*},
\begin{equation}
    w_j = \frac{((\Sigma^K)^{-1}\bfo_{k})_j}{\bfo_{k}(\Sigma^K)^{-1}\bfo_{k}}.
\end{equation}
Using this, substituting for $\nu$ with \eqref{eq:nu}, and multiplying through by $\bfo_{k}(\Sigma^K)^{-1}\bfo_{k}/2$ gives us
\begin{equation}
    0 \leq \sigma^2 \beta_i \sum_{j \in K} \beta_j ((\Sigma^K)^{-1}\bfo_{k})_j - 1.
\end{equation}
By \eqref{eq:vi}, 
\begin{equation}
    ((\Sigma^K)^{-1}\bfo_{k})_j = \frac{1}{\delta_j^2} - \frac{\beta_j}{\delta_j^2} \frac{C_K}{B_K}.
\end{equation}
Using this in the previous inequality, we have
\begin{eqnarray}
   0 &\leq& \sigma^2 \beta_i \sum_{j \in K} \frac{\beta_j}{\delta_j^2} (1 - \frac{\beta_j C_K}{B_K}) -1 \\
   &=& \sigma^2 \beta_i \big( C_K - \frac{C_K}{B_K} (B_K - \frac{1}{\sigma^2}) \big) -1 \\
   &=& \sigma^2 \beta_i \frac{C_K}{\sigma^2 B_K} -1 = \beta_i \frac{C_K}{B_K} -1,
\end{eqnarray}
or
\begin{equation} \label{eq:BK<CK}
    B_K \leq \beta_i C_K.
\end{equation}

Combining \eqref{eq:BK>CK} and \eqref{eq:BK<CK}, we have established the following
\begin{lemma} \label{lem:dichotomy}
With $B_K$ and $C_K$ as defined above,
$i \in K$ if and only if $B_K > \beta_i C_K$.
\end{lemma}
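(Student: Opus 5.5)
The lemma is essentially a packaging of the two computations just completed, so the plan is to prove each direction from them. We use the KKT system \eqref{eq:KKT}--\eqref{eq:KKTd} for $w=w^L$ together with the closed form \eqref{eq:w*} from Theorem \ref{thm:wL}. We also use the Woodbury expression \eqref{eq:vi} for the entries of $(\Sigma^K)^{-1}\bfo_k$. Throughout we note that $B_K>0$, since it is $1/\sigma^2$ plus a sum of nonnegative terms. This positivity is what lets us clear denominators without flipping inequalities.

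For the forward direction, we fix $i\in K$. By definition $w_i>0$, and by \eqref{eq:w*} $w_i$ equals $((\Sigma^K)^{-1}\bfo_k)_i$ divided by $\bfo_k^\top(\Sigma^K)^{-1}\bfo_k$. The denominator is positive because $\Sigma^K$ is positive definite, so $((\Sigma^K)^{-1}\bfo_k)_i>0$. Substituting \eqref{eq:vi} and multiplying by the positive quantity $\delta_i^2 B_K$ gives \eqref{eq:beta-condition}, that is, $B_K>\beta_i C_K$.

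For the reverse direction, we argue by contrapositive: if $i\notin K$, then $B_K\le \beta_i C_K$. Here $w_i=0$, and complementary slackness places no constraint on $\lambda_i$ beyond $\lambda_i\ge 0$. From \eqref{eq:KKT} we get $\lambda_i=2(\Sigma w)_i+\nu\ge0$. Since $w$ is supported on $K$ and $\Delta$ is diagonal, $(\Sigma w)_i=\sigma^2\beta_i\sum_{j\in K}\beta_j w_j$, as in \eqref{eq:sigmawi}. We then insert $\nu$ from \eqref{eq:nu} and $w_j$ from \eqref{eq:w*}, and multiply by the positive scalar $\bfo_k^\top(\Sigma^K)^{-1}\bfo_k/2$. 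Next we substitute $((\Sigma^K)^{-1}\bfo_k)_j = \delta_j^{-2}(1-\beta_j C_K/B_K)$ and use $\sum_{j\in K}\beta_j^2/\delta_j^2 = B_K-1/\sigma^2$. The sum then collapses to $\beta_i C_K/B_K - 1\ge 0$, and multiplying by $B_K>0$ yields $B_K\le\beta_i C_K$.

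Taken together, the two directions say that membership in $K$ is equivalent to the strict inequality $B_K>\beta_iC_K$, since the complement gives the non-strict reverse inequality. The only delicate point is the algebraic simplification in the reverse direction, where one must track signs carefully. In particular, no assumption on the sign of $\beta_i$ or $C_K$ is needed: every multiplication in the argument is by $B_K$, $\delta_j^2$, or $\bfo_k^\top(\Sigma^K)^{-1}\bfo_k$, all of which are strictly positive. The identity $\sum_{j \in K}\beta_j^2/\delta_j^2 = B_K - 1/\sigma^2$ is the key simplification to verify.
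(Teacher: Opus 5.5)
Your proposal is correct and follows the paper's proof essentially step for step. For $i\in K$, positivity of $((\Sigma^K)^{-1}\bfo_k)_i$ combined with \eqref{eq:vi} and clearing the positive denominators gives $B_K>\beta_i C_K$. For $i\notin K$, the multiplier inequality $\lambda_i = 2(\Sigma w)_i+\nu\ge 0$, after substituting \eqref{eq:nu}, \eqref{eq:w*} and the Woodbury entries, collapses to $\beta_i C_K/B_K-1\ge 0$.
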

\begin{corollary}
    $C_K \neq 0$.
\end{corollary}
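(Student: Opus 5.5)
We argue by contradiction, assuming $C_K = 0$ and applying Lemma \ref{lem:dichotomy}. In that case the criterion ``$B_K > \beta_i C_K$'' no longer depends on $i$: it reduces to $B_K > 0$ for every index $i$. We then need only the sign of $B_K$. Since
\[
B_K = \frac{1}{\sigma^2} + \sum_{j \in K} \frac{\beta_j^2}{\delta_j^2},
\]
and $\sigma^2 > 0$ and $\delta_j^2 > 0$, we have $B_K \geq 1/\sigma^2 > 0$. So if $C_K = 0$, Lemma \ref{lem:dichotomy} puts every index $i \in \{1,\dots,p\}$ in $K$, that is, $K = \{1,2,\dots,p\}$.

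With $K$ equal to the full index set, $C_K = \sum_{j=1}^p \beta_j/\delta_j^2$. This is strictly positive by the sign normalization \eqref{eq:positive}, which contradicts $C_K = 0$. Hence $C_K \neq 0$.

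No step here is really an obstacle. The one point to check is that Lemma \ref{lem:dichotomy} applies to every index $i \leq p$, whether or not $i \in K$. It does, because its two directions were derived separately for $i \in K$ (from \eqref{eq:vi}) and for $i \notin K$ (from the KKT multiplier inequality), so it is a genuine if-and-only-if over all indices. It is also worth noting that the argument gives a little more: $C_K > 0$ follows once we combine $B_K > 0$ with the fact that $K$ is nonempty and the threshold structure, but the statement as posed needs only $C_K \neq 0$.
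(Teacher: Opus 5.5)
Your argument is correct and is essentially the paper's proof read contrapositively: the paper splits into $k=p$ (where the standing assumption $C_P\neq 0$ applies directly) and $k<p$ (where some $j\notin K$ gives $0<B_K\le\beta_j C_K$), while you observe that $C_K=0$ together with $B_K>0$ and Lemma~\ref{lem:dichotomy} forces $K=\{1,\dots,p\}$ and hence $C_K=C_P\neq 0$. Your closing aside is imprecise, though. $C_K>0$ cannot follow from $B_K>0$, $K\neq\emptyset$ and the threshold structure alone, since replacing $\beta$ by $-\beta$ leaves $\Sigma$, $K$ and $B_K$ unchanged but flips the sign of $C_K$. The paper's Lemma~\ref{lem:CKpos} genuinely needs the normalization $C_P>0$ together with the increasing ordering of the $\beta_i$.
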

{\em Proof of Corollary.}
If ${k} = p$, then $C_K \neq 0$ is our standing assumption. Otherwise there exists $j \notin K$, so by Lemma \ref{lem:dichotomy} $B_K \leq \beta_j C_K$.
But $B_K >0$, so again $C_K$ cannot be zero.
\qed

\begin{lemma} \label{lem:segment}
    Recall ${k}$ is the cardinality of $K$.
If $C_K >0$, then $K = \{1,2,\dots, {k} \}$. If $C_K < 0$, then $K = \{p-{k}+1,\dots,p\}$.
\end{lemma}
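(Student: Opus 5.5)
By Lemma \ref{lem:dichotomy}, membership in $K$ is decided by a single linear inequality in $\beta_i$, with coefficients $B_K$ and $C_K$ that depend only on $K$. The plan is to divide that inequality by $C_K$, which is nonzero by the preceding Corollary, and read off a threshold on $\beta_i$. Then use the increasing ordering $\beta_1 \le \cdots \le \beta_p$ to conclude that $K$ is an initial or final segment of indices.

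\emph{Case $C_K>0$.} Dividing by $C_K$, Lemma \ref{lem:dichotomy} says $i\in K$ if and only if $\beta_i < \tau$, where $\tau = B_K/C_K$. Suppose $i\in K$ and $i'<i$. Then $\beta_{i'} \le \beta_i < \tau$, so $i' \in K$. Thus $K$ is closed downward in the index order. A nonempty set of integers in $\{1,\dots,p\}$ that is closed downward and has cardinality $k$ must equal $\{1,\dots,k\}$.

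\emph{Case $C_K<0$.} Dividing reverses the inequality, so $i\in K$ if and only if $\beta_i > B_K/C_K$. The same argument shows $K$ is closed upward: if $i\in K$ and $i'>i$, then $\beta_{i'}\ge\beta_i$, so $i'\in K$. Hence $K=\{p-k+1,\dots,p\}$.

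The one point needing care is ties among the $\beta$'s. Equal betas give identical values of $\beta_i C_K$, so the dichotomy treats tied assets identically: either all of them are in $K$ or none are. The strict inequality therefore produces a genuine segment with no ambiguity at the boundary. Beyond this, I expect no real obstacle; the lemma follows directly from Lemma \ref{lem:dichotomy}, the preceding Corollary ($C_K\neq0$), and the ordering of the betas.
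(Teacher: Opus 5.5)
Your proposal is correct and is essentially the paper's argument. Both proofs combine Lemma \ref{lem:dichotomy}, the fact that $C_K \neq 0$, and the increasing order of the $\beta_i$. The only difference is presentation: the paper compares a pair $i\in K$, $j\notin K$ directly, getting $\beta_i C_K < B_K \le \beta_j C_K$ and concluding $i<j$, whereas you divide by $C_K$ to obtain an explicit threshold and then argue downward (or upward) closure of $K$.
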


{\em Proof of lemma.} Recall that the $\beta_i$ are arranged in increasing order. For all $i \in K$ and $j \notin K$, we have
\begin{equation}
    \beta_i C_K < B_K \leq \beta_j C_K,
\end{equation}
hence
\begin{equation}
    \beta_i C_K < \beta_j C_K.
\end{equation}
If $C_K >0$, this means that $\beta_i < \beta_j$ for all $i\in K, j\notin K$, and hence
$i < j$ for all $i \in K, j \notin K$. Therefore $K$ must be an initial segment of the sequence $\{1,2,\dots,p\}$.

Similarly, if $C_K <0$, then the reverse inequality is true, and $K$ must be a terminal segment of $\{1,2,\dots,p\}$. \qed

Next, define
\begin{equation}
 C_P = \sum_{j=1}^p\frac{\beta_j}{\delta_j^2}, 
\end{equation}
and recall $C_P > 0$ by our standing assumption.

\begin{lemma} \label{lem:CKpos}
   $C_K > 0$.
\end{lemma}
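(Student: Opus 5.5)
We argue by contradiction, using Lemma \ref{lem:dichotomy}, the Corollary $C_K \neq 0$, and Lemma \ref{lem:segment}. Suppose $C_K \le 0$; by the Corollary this means $C_K < 0$. The idea is to show that every asset outside $K$ then has a strictly negative beta, so that the full sum $C_P$ is negative, contradicting the standing assumption \eqref{eq:positive}.

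First I dispose of the case $k = p$. There $K = \{1,\dots,p\}$, so $C_K = C_P > 0$ by \eqref{eq:positive}, which contradicts $C_K<0$. Hence $k < p$, and the complement $K^c = \{1,\dots,p\}\setminus K$ is nonempty. (By Lemma \ref{lem:segment}, $K^c$ is in fact the initial segment $\{1,\dots,p-k\}$, though this is not needed.)

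Next, for each $j \in K^c$, Lemma \ref{lem:dichotomy} gives $B_K \le \beta_j C_K$. Now $B_K = \frac{1}{\sigma^2} + \sum_{j\in K}\beta_j^2/\delta_j^2 > 0$, and $C_K<0$, so dividing by $C_K$ reverses the inequality:
\begin{equation*}
\beta_j \le \frac{B_K}{C_K} < 0 \quad \text{for all } j \in K^c .
\end{equation*}
Hence $\sum_{j\in K^c} \beta_j/\delta_j^2 < 0$, since $K^c$ is nonempty and every term is negative.

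Finally, I split $C_P = C_K + \sum_{j \in K^c}\beta_j/\delta_j^2$. Both summands are strictly negative, so $C_P<0$, contradicting \eqref{eq:positive}. Therefore $C_K > 0$.

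The only delicate point is making sure the sign of $C_K$ is handled with strict inequalities. This rests on the Corollary ($C_K\ne 0$) and on the strict positivity of $B_K$, which comes from the $\frac{1}{\sigma^2}$ term. I expect no real obstacle beyond this.
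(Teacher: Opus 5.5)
Your proof is correct, but it shows that the excluded betas are negative by a different route from the paper.

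\textbf{The paper's route.} After assuming $C_K<0$ with $k<p$, the paper invokes Lemma \ref{lem:segment} to get $K=\{p-k+1,\dots,p\}$. It notes that $C_K<0$ forces some $\beta_j$ with $j\in K$ to be negative, hence $\beta_{p-k+1}<0$. The increasing order of the betas then gives $\beta_j<0$ for all $j\le p-k$.

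\textbf{Your route.} You skip both the segment structure and the ordering. For $j\notin K$, the dichotomy inequality $\beta_j C_K \ge B_K>0$ directly forces $\beta_j<0$.

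\textbf{Common ending and comparison.} Both arguments then reach the contradiction by splitting $C_P$ into the sum over $K$ and the sum over its complement. Your version is more self-contained: it uses only Lemma \ref{lem:dichotomy}, the Corollary $C_K\neq 0$, and $B_K>0$, and it holds for any labeling of the assets. The paper's version leans on the segment structure, which it needs anyway to identify $K=\{1,\dots,k\}$ in Lemma \ref{lem:threshold}.
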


{\em Proof.}  If ${k} = p$ then $C_K = C_P$ and there is nothing to prove, so consider the case ${k} < p$.  We know that $C_K \neq 0$. Suppose for
contradiction that $C_K < 0$.  By Lemma \ref{lem:segment},
this means that $K = \{p-{k}+1,\dots, p\}$.

Note
\begin{equation} \label{eq:CP}
    C_P = \sum_{j=1}^{p-{k}} \frac{\beta_j}{\delta_j^2} + C_K.
\end{equation}
Now $0 > C_K = \sum_{j=p-{k}+1}^{p} \frac{\beta_j}{\delta_j^2}$, so at least one of the terms of this sum must be negative.  Then, since the beta sequence is increasing, $\beta_{p-{k}+1} <0$ and  $\beta_j < 0$ for all $j \leq p - {k}$, and hence
\begin{equation}
    \sum_{j=1}^{p-{k}} \frac{\beta_j}{\delta_j^2}  < 0.
\end{equation}
This forces $C_P < 0$ via \eqref{eq:CP}, a contradiction. Hence we must have $C_K>0$ and $C_K = \sum_{j=1}^{k} \frac{\beta_j}{\delta_j^2}$.

\qed

\begin{lemma} \label{lem:threshold}
    Under our standing assumption, for the long-only solution $w$ of problem \eqref{eq: prob 1}, 
\begin{equation}
     w_i > 0 \mbox{ if and only if }   \beta_i < \frac{B_K}{C_K} =  \frac{\frac{1}{\sigma^2}+\sum_{j=1}^{k} \frac{\beta^2_j}{\delta_j^2}}{\sum_{j=1}^{k}\frac{\beta_j}{\delta_j^2}}.
\end{equation}
\end{lemma}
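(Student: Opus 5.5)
The plan is to combine Lemma \ref{lem:dichotomy} with Lemmas \ref{lem:segment} and \ref{lem:CKpos}. Lemma \ref{lem:dichotomy} says that $i \in K$, i.e. $w_i > 0$, holds exactly when $B_K > \beta_i C_K$. Lemma \ref{lem:CKpos} gives $C_K > 0$, so we may divide this inequality by $C_K$ without reversing it. This yields $w_i > 0$ if and only if $\beta_i < B_K / C_K$.

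It remains to identify $B_K/C_K$ with the explicit ratio in the statement, with the sums running from $j=1$ to $k$. Since $C_K > 0$, Lemma \ref{lem:segment} gives $K = \{1,2,\dots,k\}$, where $k = |K|$. Substituting this index set into the definitions of $B_K$ and $C_K$ gives
\begin{equation*}
B_K = \frac{1}{\sigma^2} + \sum_{j=1}^{k} \frac{\beta_j^2}{\delta_j^2}, \qquad C_K = \sum_{j=1}^{k} \frac{\beta_j}{\delta_j^2},
\end{equation*}
which is exactly the claimed threshold.

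There is no real obstacle here: all the substantive work was done in the earlier lemmas. The only point requiring care is the sign of $C_K$. Dividing by $C_K$ is legitimate only because Lemma \ref{lem:CKpos} rules out $C_K < 0$ and the corollary to Lemma \ref{lem:dichotomy} rules out $C_K = 0$. This is where the standing assumption \eqref{eq:positive} enters, and it is why the inactive assets are those with large betas rather than small ones.
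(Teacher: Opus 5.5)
Your proposal is correct and follows the paper's proof exactly: you get $C_K>0$ from Lemma \ref{lem:CKpos} and $K=\{1,\dots,k\}$ from Lemma \ref{lem:segment}, then divide the criterion of Lemma \ref{lem:dichotomy} by $C_K$. Your remark that the standing assumption \eqref{eq:positive} enters through Lemma \ref{lem:CKpos} is also accurate.
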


{\em Proof.} From Lemma \ref{lem:CKpos}, $C_K >0$.
By Lemma \ref{lem:segment}, $K = \{1,2,\dots, {k}\}$.
The result follows from Lemma \ref{lem:dichotomy}. 
\qed
\medskip

By Lemmas \ref{lem:segment} and \ref{lem:CKpos}, we have established that
\begin{equation}
    K = \{1,2,\dots,{k} \}.
\end{equation}
It remains to show that ${k} = \ell$. 
Recall $R_1 = 1/\sigma^2$ and, for $i=2,\dots,p$,
\begin{equation} \label{eq:Rdef}
 R_i = {\frac{1}{\sigma^2} + \sum_{j=1}^{i-1}\frac{\beta_j}{\delta_j^2}(\beta_j - \beta_i)}.   
\end{equation}

Now, by Lemma \ref{lem:threshold},  $\beta_{k} < B_K/C_K \leq \beta_{{k}+1}$.
Also,
\begin{eqnarray}
    R_{{k}} &=& {\frac{1}{\sigma^2} + \sum_{j=1}^{{k}}\frac{\beta_j^2}{\delta_j^2} - \beta_{{k}}\sum_{j=1}^{{k}}\frac{\beta_j}{\delta_j^2}} 
    \\
    &=&  B_K - \beta_{{k}}C_K > 0.
\end{eqnarray}
Likewise
\begin{eqnarray}
    R_{{k}+1} =  B_K - \beta_{{k}+1}C_K \leq 0.
\end{eqnarray}

By part 1 of the Theorem, the sequence $\{R_i\}$ crosses zero at most once, so this establishes
\begin{equation}
    {k} = \max\{i \leq p: R_i > 0 \} = \ell,
\end{equation}
completing the proof of Part 2.
\qed

\subsection{Proof of Theorem \ref{thm:multi}}

Let $w = w^L$ be the solution of the long-only problem for
\begin{equation}
    \Sigma = B\Omega B^\top + \Delta,
\end{equation}
and $K = \{i \leq p: w^L_i >0 \}$,
with $\ell = |K|$.
For any $p$-vector $v$, for purposes of this proof we adopt the notation that $v^K$ as defined before, and $v'$ is the complementary $(p-\ell)$-vector obtained from $v$ by deleting all the coordinates in $K$; $B'$ is obtained by deleting all the rows labeled by entries in $K$, and $B^K$ by deleting the rows not in $K$. 

First recall that $i \notin K$ implies $w_i =0$, and hence
\begin{align}
    (\Sigma w)' = \left((B\Omega B^\top +\Delta)w\right)'  =  (B\Omega B^\top w)' =  B' \Omega(B^K)^\top w^K . \label{eq:sigmaw}
\end{align}

Recall equations \eqref{eq:KKT} and \eqref{eq:KKTd}:
\begin{equation*}
    2\Sigma w - \lambda + \nu \bfo_p =\bfz_p; \quad
    \lambda_i \geq 0, w_i \geq 0, \, i=1,\dots,p.
\end{equation*}
Therefore
\begin{equation}
    \bfz_{p-\ell} \leq \lambda'  =  2(\Sigma w)' + \nu \bfo_{p-\ell} 
\end{equation}
and hence
\begin{equation}\label{eq:KKT_1}
    \bfz_{p-\ell} \leq 2B' \Omega(B^K)^\top w^K + \nu \bfo_{p-\ell}
\end{equation}
using equation \eqref{eq:sigmaw}.

Substituting using \eqref{eq:nu} and \eqref{eq:w*}, clearing denominators, and dividing by 2, we obtain
\begin{align}
    \bfz_{p-\ell} \leq B'\Omega(B^K)^\top (\Sigma^K)^{-1}\bfo_\ell -\bfo_{p-\ell} = B' h^K - \bfo_{p-\ell}
\end{align}
with 
$$h^K = \Omega (B^K)^\top (\Sigma^K)^{-1} \bfo_\ell$$
 as defined in \eqref{eq:hk} of
 Theorem \ref{thm:multi}.

Thus we have established that for every $i \notin K$,
\begin{equation}
    1 \leq B_i h^K.
\end{equation}

Conversely, recall
\begin{equation}
  \Sigma^K = B^K \Omega (B^K)^\top + \Delta^K.  
\end{equation}
We therefore have
\begin{equation}
    B^K h^K = B^K \Omega (B^K)^\top (\Sigma^K)^{-1} \bfo_\ell
    = (\Sigma^K - \Delta^K)(\Sigma^K)^{-1} \bfo_\ell =
    \bfo_\ell - \Delta^K(\Sigma^K)^{-1} \bfo_\ell .
\end{equation}
From equation \eqref{eq:w*},
$
    (\Sigma^K)^{-1} \bfo_\ell = (\bfo_\ell^\top (\Sigma^K)^{-1} \bfo_\ell) w^K 
$, and hence
\begin{equation}
    B^K h^K = \bfo_\ell - (\bfo_\ell^\top (\Sigma^K)^{-1} \bfo_\ell) \Delta^K w^K < \bfo_\ell,
\end{equation}
since the entries of
$(\bfo_\ell^\top (\Sigma^K)^{-1} \bfo_\ell) \Delta^K w^K$ are all positive.
We conclude that
\begin{equation}
    B_i h^K < 1
\end{equation}
for all $i \in K$.
\qed

\section{Single factor beta estimation} \label{sec:beta_est}

\subsection{The JSE estimator}

Covariance matrix shrinkage estimators, e.g. \cite{ledw2004}, have enjoyed wide adoption for various problems when estimating large-dimensional covariance matrices from data, and typically take the form
\begin{equation}
    \Sigma = \alpha S + (1-\alpha)T,
\end{equation}
where $S$ is a sample covariance matrix, $T$ is a suitable target matrix, such as a scalar matrix, and $\alpha \in (0,1)$ is defined to minimize some error function.

However, in cases where we are working within the structure of a factor model, we are primarily interested in estimating the leading covariance eigenvector(s), from which an estimated factor model can be defined.  Eigenvector shrinkage is the subject of a recent stream of research in \cite{goldberg2020, goldberg2023, goldberg2022,strat-spec2025, shkolnik2022}. We call it JSE (James-Stein for Eigenvectors) because the shrinkage formulas themselves turn out to be close analogs of the classical James-Stein shrinkage formulas for estimation of multivariate means. The theory provides asymptotic improvements in the HL asymptotic regime corresponding to the limit as the dimension $p \to \infty$ with the number of samples $n$ fixed.

In this section we describe how to compute the JSE estimator of the leading eigenvector.

As in Section \ref{subsec:beta_est}, we  consider a single factor model of returns,
\begin{equation}
    r = \beta f + \epsilon .
\end{equation}
  The covariance matrix of $r$ is given by
\eqref{eq:singlefactor}:
\begin{equation*}
    \Sigma = \sigma^2 \beta \beta^T + \Delta.
\end{equation*}

Setting $\zeta^2 = ||\beta||^2 \sigma^2$ and $b = \beta/||\beta||$, the single index model may be written
\begin{equation}
    \Sigma = \zeta^2 bb^\top + \Delta,
\end{equation}
where recall $\Delta = diag(\delta_1^2,\dots,\delta_p^2)$.
 Estimating $\Sigma$ requires estimating the scalar $\zeta^2$, the unit vector $b$, and the vector $(\delta_1^2,\dots,\delta_p^2)$ of idiosyncratic variances.

Suppose we observe a time series of $n$ returns of each of $p$ assets, and $p >> n$ as might typically be the case when $n$ is limited by non-stationarity or data availability. The observations can be summarized by a $p \times n$ data matrix $Y$, and the resulting sample covariance matrix is
\begin{eqnarray}
    S = YY^\top/n.
\end{eqnarray}

Let $\lambda^2$ denote the leading eigenvalue of  $S$, and let
\begin{equation}
    \ell^2 = \frac{tr(S) - \lambda^2}{n-1}
\end{equation}
be the average of the non-zero eigenvalues that are less than $\lambda^2$, where $tr(S)$ denotes trace. We can think of 
\[
\eta^2 = \lambda^2 - \ell^2
\]
as the average leading sample eigengap, and it turns out that $\eta^2$ is an unbiased approximation of $\zeta^2$.

To approximate the vector $b = \beta/||\beta||$, we could select the
leading sample eigenvector $h$ of $S$.
However, for fixed $n$, the asymptotic limit in $p$ of the cosine of the angle beetween $h$ and $b$ is positive. In fact it is equal to
\[
\big(1 + \frac{\delta^2}{n B^2 \sigma^2}\big)^{-1} <1,
\]
where $B^2$ is the limit of $||\beta||^2/p$ and $\delta^2$ is the limit of $(1/p) \sum_{i=1}^p \delta_i^2$ as $p \to \infty$.

A definite improvement is obtained with the James-Stein eigenvector shrinkage estimator, $\hjse$, defined as follows.  Let 
$h_1$ denote the projection of $h$ onto the line spanned by $\bfo$. Define the shrinkage constant
\begin{equation}
    c = \frac{\ell^2}{\lambda^2 (1-||h_1||^2)}
\end{equation}
and
\begin{equation}
    H = c h_1 + (1-c) h.
\end{equation}
Then
\begin{equation}
    \hjse = H/||H||.
\end{equation}
The unit vector $\hjse$ is obtained from $h$ by correcting a concentration of measure effect that pushes $h$ farther away from $\bfo$ than $b$.

Recalling that the $i$th diagonal element $s_{ii}$ of $S$ is the sample variance of the $i$th asset, we estimate the $i$th idiosyncratic variance as
\begin{equation}
    \hat \delta_i^2 = s_{ii} - \eta^2 (\hjse_i)^2.
\end{equation}
Our estimated non-singular covariance matrix $\sjse$ is now
\begin{equation}
    \sjse = \eta^2 \hjse(\hjse)^\top + diag(\hat \delta_1^2,\dots,\hat \delta_p^2).
\end{equation}

\subsection{Consistency of single index estimators} \label{subsec:consistency}

When estimating betas from market data as in Section \ref{subsec:beta_est}, there are three covariance matrices in the picture:
\begin{enumerate}
    \item the unobserved population covariance $$\Sigma = \sigma^2 \beta \beta^\top + \Delta,$$
    \item the covariance estimated from observed returns
    $$ \sjse = \eta^2 \hjse(\hjse)^\top + \djse,$$
    \item the market covariance incorporating the observed market portfolio $w_M$
    $$\Sigma^M = \sigma^2_M \beta^M (\beta^M)^\top + \Delta^M$$
    with notation defined in \eqref{eq:mkt_sigma},  \eqref{eq:mkt_beta}, and \eqref{eq:delta-m}.
\end{enumerate}
There is a consistency question with this approach, because 
$w_M^\top \Sigma^M w_M \neq \sigma_M^2$.  Further, the beta factors for the two estimators disagree: $(\eta/\sigma_M)\hjse \neq \beta^M$.

However, a single factor estimator like $\sjse$ has the benefit that the inconsistency above vanishes for a large number of assets, as $p \to \infty$.

Direct computation establishes the following

\begin{theorem}
Consider a sequence in $p$ of $p$-dimensional population covariance models 
$$\Sigma = \sigma^2 \beta \beta^\top + \Delta$$
where $\sigma^2$ is fixed, $\Delta$ is bounded in $p$, and $|\beta|^2/p$ tends to a positive finite limit. Suppose the number of observations used to determine the sample covariance matrix is fixed independent of $p$.

Then, with $\Sigma^M$ computed using $\hat \Sigma = \sjse$, we have, as $p \to \infty$,
\begin{eqnarray}
    |b^M - \hjse| \to 0 \\
    | \sigma^2_M |\beta^M|^2 - \eta^2 |/p \to 0 \\
    |\delta^M_i - \djse_i| \to 0.
\end{eqnarray}
\end{theorem}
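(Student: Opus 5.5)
The approach is a direct computation. Since $\sjse=\eta^2\hjse(\hjse)^\top+\djse$ is itself a one-factor matrix, every market quantity built from it can be written out explicitly. Write $h=\hjse$, $D=\djse$, $w=w_M$, and $m=h^\top w$. Then
\begin{equation*}
\sigma_M^2=\eta^2 m^2+w^\top D w,\qquad \sigma_M^2\beta^M=\sjse w=\eta^2 m\,h+Dw .
\end{equation*}
So $\beta^M$ is a multiple of the ``signal'' vector $\eta^2 m h$ plus a ``noise'' vector $Dw$. Each of the three claims reduces to showing that the noise is negligible relative to the signal at the appropriate scale.

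First I will fix the scalings. The statement is informal here, so I will state these as the standing regime.
\begin{itemize}
\item $\eta^2/p$ tends to a positive limit. This follows from $\eta^2$ approximating $\zeta^2=\sigma^2|\beta|^2$.
\item $D$ is bounded above, so $\|D\|\le C$.
\item The market portfolio is diversified: $|w|=O(p^{-1/2})$ and $w^\top\bfo_p=1$.
\item $m=h^\top w\ge c\,p^{-1/2}$ with $m>0$, after fixing the sign of $h$.
\end{itemize}
Under these, the signal norm is $\eta^2 m\gtrsim p^{1/2}$, while $|Dw|\le C|w|=O(p^{-1/2})$ and $w^\top Dw=O(p^{-1})$.

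Step 1 (direction). $b^M=\beta^M/|\beta^M|$ is the normalization of $\eta^2 m h+Dw$. Since $\bigl|u/|u|-v/|v|\bigr|\le 2|u-v|/|v|$, with $v=\eta^2 m h$ we get
\begin{equation*}
|b^M-h|\le \frac{2|Dw|}{\eta^2 m}=O(p^{-1})\to 0 .
\end{equation*}

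Step 2 (factor variance). The quantity to compare is
\begin{equation*}
\sigma_M^2|\beta^M|^2=\frac{|\sjse w|^2}{\sigma_M^2}=\frac{\eta^4m^2+2\eta^2 m\,h^\top Dw+|Dw|^2}{\eta^2m^2+w^\top Dw}.
\end{equation*}
Subtracting $\eta^2$ gives a ratio whose numerator is $2\eta^2 m\,h^\top Dw+|Dw|^2-\eta^2 w^\top Dw$. Each term is $O(1)$: for the first, $p\cdot p^{-1/2}\cdot p^{-1/2}$; for the last, $p\cdot p^{-1}$. The denominator is bounded below by $\eta^2m^2\gtrsim 1$. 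Hence $|\sigma_M^2|\beta^M|^2-\eta^2|=O(1)$, and dividing by $p$ gives the second limit.

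Step 3 (specific risks). Both specific variances are defined by subtraction from the same sample variance: $(\delta^M_i)^2=s_{ii}-\sigma_M^2(\beta^M_i)^2$ and $\djse_i{}^2=s_{ii}-\eta^2h_i^2$. It therefore suffices to show $\bigl((\sjse w)_i\bigr)^2/\sigma_M^2-\eta^2h_i^2\to0$. Here $(\sjse w)_i=\eta^2 m h_i+D_{ii}w_i$, and the same expansion as in Step 2 shows the error is $O(\eta^2 h_i^2\cdot p^{-1})+O(|w_i|\,|h_i|)$. This tends to zero provided the entries satisfy $h_i=O(p^{-1/2})$ and $w_i=O(p^{-1/2})$. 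The inequality $|\sqrt x-\sqrt y|\le\sqrt{|x-y|}$ then converts this into $|\delta^M_i-\djse_i|\to0$ without needing a lower bound on the specific risks.

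The main obstacle is the lower bound $h^\top w_M\gtrsim p^{-1/2}$, since everything else is bookkeeping. I would obtain it from the JSE theory cited above: $\hjse$ is asymptotically aligned with $b$, and it is shrunk toward $\bfo$, so $h^\top w_M$ behaves like $b^\top w_M=\beta^\top w_M/|\beta|$. For a net-long, diversified market portfolio and betas with a positive average, this is of order $p^{-1/2}$. If that argument proves too delicate, I would add it as an explicit hypothesis on $w_M$, in the same spirit as the net-long condition discussed after Corollary~\ref{cor:threshold}.
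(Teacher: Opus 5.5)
Your proposal is correct and is essentially the paper's own argument. The paper only asserts that the result follows by direct computation, and your decomposition $\sjse w=\eta^2 m\,h+Dw$ into a signal of norm $\asymp p^{1/2}$ and a noise of norm $O(p^{-1/2})$ is that computation. Your diversification and $h^\top w\gtrsim p^{-1/2}$ hypotheses are conditions the informal statement leaves implicit but genuinely needs: for $w=e_1$, say, the second limit fails.

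There is one bookkeeping slip. In Step 3 the cross term $2\eta^2 m\,h_iD_{ii}w_i/(\eta^2m^2+w^\top Dw)$ is $O(p^{1/2}|h_i||w_i|)$, not $O(|h_i||w_i|)$. Under your entry bounds this is still $O(p^{-1/2})\to 0$, so the conclusion stands.
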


This means that the market beta, if computed with $\sjse$, is asymptotically the same as the leading factor in the single index covariance matrix that defines it.

\section{Multifactor JSM estimation} \label{subsec:JSM}

The multifactor JSM estimator is obtained from a PCA estimate by an appropriate shrinkage of the $q$-dimensional factor subspace toward a suitable shrinkage target. It is suitable when $p >> n$; complete discussion of this method appears in \cite{skggb2024}.

In this section we summarize the method, which is the same for any $q$. The returns model is
\begin{equation}
    r = Bf + \epsilon,
\end{equation}
where $f$ is a mean-zero random $q$-vector of returns to $q$ risk factors, $\epsilon$ is a mean zero random $p$-vector of specific returns whose components are independent of each other and of $f$, and $B$ is an unknown $p \times q$ parameter matrix of sensitivities of the securities to the factors.

With a time series of $n$ i.i.d.~returns, $q < n < p$, let $R$ denote the $p \times n$ matrix of centered returns data. For the sample covariance matrix $S = RR^\top/n$ with rank $n_+ \leq n$ and positive eigenvalues $\lambda_1^2 \geq \lambda_2^2 \geq  \lambda_3^2 \geq \dots \geq \lambda_{n_+}^2$, take the spectral decomposition
\begin{equation}
    S = \sum_{i=1}^{n_+} \lambda_i^2 h_i h_i^\top = HH^\top + N,
\end{equation}
where $h_i$ is the unit eigenvector corresponding to $\lambda_i^2$, $H$ is the $p \times q$ matrix with columns $\lambda_1 h_1,\dots,\lambda_q h_q$, and $N = S - HH^\top$.
Then form the specific variance estimate $\Delta = diag(N)$ to be the diagonal matrix with the same diagonal as $N$.

The $q$-factor PCA covariance estimator is $\Sigma^{\rm PCA} = HH^\top + \Delta$, but this can be improved by shrinking H to obtain
\begin{equation}
    \sjsm = \hjsm  \hjsm^\top + \Delta
\end{equation}
as follows.

From the re-weighted data matrix  $R_w = \Delta^{-1/2}R$, we can recompute the $p \times q$ leading eigenvector matrix $H$ via
\begin{equation}
    R_w R_w^\top/n = H_w H_w^\top + N_w
\end{equation}
and then let $\bar H = \Delta^{1/2} H_w$.
 Define a $p \times q$ shrinkage target
 \begin{equation}
     M = \bfo_p(\bfo_p^\top \Delta^{-1} \bfo_p)^{-1} \bfo_p^\top \Delta^{-1} \bar H.
 \end{equation}
 Then $\hjsm$ is defined by linear shrinkage toward $M$:
\begin{equation}
    \hjsm = \bar H C + M(I - C)
\end{equation}
where 
\begin{equation}
    C = I - \nu^2 J^{-1}, \quad J = (\bar H - M)^\top \Delta^{-1} (\bar H - M),
\end{equation}
and 
\begin{equation}
    \nu^2 = \frac{trace(\Delta)}{n_+-q}.
\end{equation}

The format
\begin{equation}
    \hjsm = B \Omega B^\top + \Delta
\end{equation}
is obtained by setting the columns of $B$ to be the ordered unit eigenvectors of $\hjsm {\hjsm}^\top$, and $\Omega$ the diagonal matrix of corresponding eigenvalues.

\bibliography{ref2} 

\end{document}